\pgfplotsset{compat=1.14}
\newtheoremstyle{custom}
{} 
{} 
{} 
{} 
{\bfseries} 
{:} 
{.25em} 
{} 
\theoremstyle{custom}
\newtheorem{theorem}{Theorem}
\newtheorem{lemma}{Lemma}
\newtheorem{definition}{Definition}
\newtheorem{remark}{Remark}
\newtheorem*{theorem*}{Theorem}
\newtheorem*{lemma*}{Lemma}
\newtheorem*{proposition*}{Proposition}
\newtheorem*{definition*}{Definition}
\newtheorem*{example*}{Example}
\newtheorem*{remark*}{Remark}
\newtheorem*{corollary*}{Corollary}
\let\l@ENGLISH\l@english
\title{Sublinear Latency for Simplified Successive Cancellation Decoding of Polar Codes}
\author{Marco~Mondelli, Seyyed~Ali~Hashemi, John~Cioffi, Andrea~Goldsmith
\thanks{M.~Mondelli is with the Institute of Science and Technology (IST) Austria, Klosterneuburg, Austria (email: marco.mondelli@ist.ac.at). S.~A.~Hashemi and J.~Cioffi are with the Department of Electrical Engineering, Stanford University, Stanford, CA 94305, USA (email: ahashemi@stanford.edu, cioffi@stanford.edu). A.~Goldsmith is with the Department of Electrical Engineering, Princeton University, Princeton, NJ 08544, USA (email: goldsmith@princeton.edu).}
}
\begin{document}

\maketitle
\begin{abstract}
\noindent This work analyzes the latency of the simplified successive cancellation (SSC) decoding scheme for polar codes proposed by Alamdar-Yazdi and Kschischang. It is shown that, unlike conventional successive cancellation decoding, where latency is linear in the block length, the latency of SSC decoding is sublinear. More specifically, the latency of SSC decoding is $O(N^{1-1/\mu})$, where $N$ is the block length and $\mu$ is the scaling exponent of the channel, which captures the speed of convergence of the rate to capacity. Numerical results demonstrate the tightness of the bound and show that most of the latency reduction arises from the parallel decoding of subcodes of rate $0$ or $1$.
\end{abstract}


\section{Introduction} \label{sec:intro}

Polar codes provably achieve capacity for any binary memoryless symmetric (BMS) channel with low encoding and decoding complexity \cite{Ari09}. Because of their attractive properties, polar codes have been recently adopted for the enhanced mobile broadband (eMBB) control channel of the fifth generation (5G) wireless communications standard \cite{3gpp_polar}. For a polar code of block length $N$, the encoding and decoding complexity is $O(N\log N)$; the code construction can be performed with complexity $O(N)$ \cite{TV13con, RHTT} and, by exploiting a partial order between the synthetic channels, the construction complexity becomes sublinear in $N$ \cite{mondelli2018construction}. In addition, the error probability under successive cancellation (SC) decoding scales with the block length roughly as $2^{-\sqrt{N}}$ \cite{ArT09}. Moreover, polar codes are not affected by error floors \cite{MHU15unif-ieeeit}.

The speed of convergence of the rate to capacity has also been extensively studied \cite{HAU14, MHU15unif-ieeeit, XG13, GB14, MHU14list-ieeeit, fazeli2018binary, guruswami2019ar}. These works demonstrate that the gap to capacity scales with the block length as $N^{-1/\mu}$, where the parameter $\mu$ is called the \emph{scaling exponent} and it depends on the transmission channel. Equivalently, the smallest block length needed to achieve an assigned gap to capacity scales as
\begin{equation}
N \sim \frac{1}{(I(W)-R)^\mu} \text{,}
\end{equation}
where $R$ is the rate of the code and $I(W)$ is the capacity of the BMS channel $W$. For any BMS $W$, the following upper and lower bounds on $\mu$ hold: $3.579 \le \mu\le 4.714$. Furthermore, when $W$ is a binary erasure channel (BEC), then $\mu \approx 3.63$ \cite{HAU14, MHU15unif-ieeeit}; when $W$ is a binary additive white Gaussian noise channel (BAWGNC), then $\mu \approx 4$ \cite{KMTU10}; and when $W$ is a binary symmetric channel (BSC), a conjecture is that $\mu \approx 4.2$. The introduction of any finite-size list does not improve the scaling exponent under optimal MAP decoding and genie-aided SC decoding \cite{MHU14list-ieeeit}. However, by using large polarization kernels, it is possible to approach the optimal scaling exponent $\mu=2$ \cite{fazeli2018binary, guruswami2019ar}. The moderate deviations regime, in which both the error probability and the gap to capacity jointly vanish as the block length grows large, has also been a subject of recent investigation \cite{MHU15unif-ieeeit, fong2017scaling, wang2018polar, blasiok2018polar}.

In \cite{TVa15} the error correction performance of the SC decoder is improved through an SC list (SCL) decoder with time complexity $O(L N\log N)$ and space complexity $O(L N)$, where $L$ is the size of the list. SCL decoding keeps a list of the most likely codewords by running $L$ coupled SC decoders in parallel. Empirically, the error probability of the SCL decoder is close to that of the optimal MAP decoder with practical values of the list size $L$. Furthermore, by adding a few extra bits of cyclic redundancy check (CRC) precoding, the performance is comparable to state-of-the-art low-density parity-check (LDPC) codes. One disadvantage of SCL decoding is the large area required in hardware since multiple coupled SC decoders need to be implemented. Partitioned SCL decoders have been proposed to address this issue \cite{hashemi_PSCL, hashemi2017partitioned, hashemi2018decoder}.

Another problem associated with SC-based decoding algorithms, such as SC and SCL, is their high latency. In fact, SC decoding is a serial algorithm, in the sense that decoding proceeds bit by bit. In order to address the problem, a \emph{simplified} SC (SSC) decoder was proposed in \cite{alamdar}, which identifies smaller constituent codes in the polar code and decodes them in parallel. As a result, the latency is reduced with no penalty in the error correction performance. In \cite{sarkis,hanif}, more constituent codes were identified and low-complexity parallel decoders were designed to increase the throughput and reduce the latency even further. In \cite{hashemi_SSCL_TCASI,hashemi_FSSCL_TSP}, these results were extended to SCL decoding. This extension introduced a simplified SCL (SSCL) algorithm that decodes the constituent codes in parallel while keeping the same error correction performance as the standard SCL decoding. Recently, a variant of polar codes with log-logarithmic time complexity per information bit has been introduced in \cite{wang2019log}. This improves upon the logarithmic time complexity per information bit for SC decoding of standard polar codes. However, the time complexity per information bit is a different metric from the decoding latency, which represents the time complexity of the overall decoding process.

This paper quantifies the latency of the SSC decoder proposed in \cite{alamdar}. The main result is that the number of time steps needed by the SSC decoder is $O(N^{1-1/\mu})$, which results in sublinear latency. As a benchmark, the decoding latency of the standard SC decoder with a fully parallel architecture is $2N-1$ time steps \cite{Ari09}, thus the SSC decoder yields a (multiplicative) latency gain of $N^{1/\mu}$, where $\mu$ is the aforementioned scaling exponent. To be concrete, this means that the latency of SSC decoding of polar codes scales roughly as $N^{3/4}$ (more precisely, it scales as $N^{0.72}$ for BECs and as $N^{0.76}$ for BSCs). Numerical results show that this bound is tight and also captures the dependence on the transmission channel via the scaling exponent.

The remainder of the paper is organized as follows: Section~\ref{sec:prel} provides some preliminaries that include the formal definition of scaling exponent, the construction rule, and the SC and SSC decoding algorithms; Section~\ref{sec:main} states and proves that the latency of SSC decoding is $O(N^{1-1/\mu})$, deferring the proofs of two intermediate lemmas to Appendix~\ref{app:proofs}; Section~\ref{sec:numerical} presents some numerical results that demonstrate the tightness of the upper bound; and Section \ref{sec:concl} concludes the paper. The numerical results also show that most of the savings arises from pruning constituent codes that are either rate-$0$ or rate-$1$: pruning additional constituent codes provides some latency gain at moderate block lengths, but it is suggested that the latency still scales as $N^{1-1/\mu}$ for large $N$. 

\section{Polar Coding Preliminaries} \label{sec:prel}

\subsection{Channel Polarization}

Let $W$ be a BMS channel with input alphabet $\mathcal{X}=\{0,1\}$, output alphabet 
$\mathcal{Y}$, and transition probabilities $\{W(y \mid x) : x\in \mathcal{X}, y\in \mathcal{Y}\}$. Denote by $Z(W)\in [0,1]$ the Bhattacharyya parameter of $W$, which
is defined as
\begin{align*}
& Z(W)= \sum_{y \in \mathcal{Y}} \sqrt{W(y\mid 0)W(y \mid 1)}.
\end{align*}
$Z(W)$ is a measure of the reliability of $W$: if $Z(W)\approx 0$, then the channel is almost noiseless (i.e., its capacity $I(W)\approx 1$); and if $Z(W)\approx 1$, then the channel is very noisy (i.e., its capacity $I(W)\approx 0$). The basis of channel polarization is to map two identical copies of the channel $W: \mathcal{X}\to \mathcal{Y}$ into the pair of channels $W^0: \mathcal{X}\to \mathcal{Y}^2$ and $W^1:\mathcal{X}\to \mathcal{X}\times\mathcal{Y}^2$, defined as \cite[Section I-B]{Ari09}, \cite[Section I-B]{HAU14},
\begin{equation}\label{eq:mapch}
\begin{split}
W^0(y_1, y_2\mid x_1) & = \sum_{x_2\in \mathcal X} \frac{1}{2}W(y_1\mid x_1 \oplus x_2) W(y_2\mid x_2),\\
W^1(y_1, y_2, x_1\mid x_2) & = \frac{1}{2}W(y_1\mid x_1 \oplus x_2) W(y_2\mid x_2).\\
\end{split}
\end{equation}
Then, the idea is that $W^0$ is a ``worse'' channel and $W^1$ is a ``better'' channel than $W$. This statement can be quantified by the following bounds among the Bhattacharyya parameters of $W$, $W^0$, and $W^1$:
\begin{align}
Z(W)\sqrt{2-Z(W)^2}&\le Z(W^0)\le 2Z(W)-Z(W)^2,\label{eq:minusB}\\
&Z(W^1)=Z(W)^2,\label{eq:plusB}
\end{align}
which follow from Proposition 5 of \cite{Ari09} and from Exercise 4.62 of \cite{RiU08}. By repeating $n$ times the operation \eqref{eq:mapch}, we map $2^n$ identical copies of $W$ into the synthetic channels $W_n^{(i)}$ ($i\in \{1, \ldots, 2^n\}$), defined as 
\begin{equation}\label{eq:defWni}
W_n^{(i)} = (((W^{b_1^{(i)}})^{b_2^{(i)}})^{\cdots})^{b_n^{(i)}},
\end{equation}
where $(b_1^{(i)}, \ldots, b_n^{(i)})$ is the binary representation of the integer $i-1$ over $n$ bits. Furthermore, define a random sequence of channels $W_n$, as $W_0=W$, and 
\begin{equation}
W_{n} = \left\{ \begin{array}{ll}W_{n-1}^0, & \mbox{ w.p. } 1/2,\\ W_{n-1}^1,& \mbox{ w.p. } 1/2.\\ \end{array}\right.
\end{equation}
Let $Z_n(W)=Z(W_n)$ be the random process that tracks the Bhattacharyya parameter of $W_n$. Then, from \eqref{eq:minusB} and \eqref{eq:plusB} we deduce that, for $n\ge 1$,
\begin{equation}\label{eq:eqBMSC}
Z_{n} \left\{ \begin{array}{ll}\in \left[Z_{n-1}\sqrt{2-Z^2_{n-1}},\, 2 Z_{n-1}-Z^2_{n-1}\right], & \mbox{ w.p. } 1/2,\\ =Z^2_{n-1},& \mbox{ w.p. } 1/2.\\ \end{array}\right.
\end{equation}

The synthetic channels $W_n^{(i)}$ \emph{polarize} in the sense that, as $n$ grows large, most of them become either completely noisy or completely noiseless. Then, we put information bits in the noiseless synthetic channels, and we \emph{freeze} to 0 the remaining ones. Formally, as $n\to\infty$, $Z_n$ converges almost surely to a random variable $Z_\infty$ such that
\begin{equation}
    Z_\infty=\left\{ \begin{array}{ll}0, & \mbox{ w.p. } I(W),\\ 1,& \mbox{ w.p. } 1-I(W).\\ \end{array}\right.
\end{equation}

\subsection{Scaling Exponent}

The fact that the synthetic channels $W_n^{(i)}$ are ``polarized'' implies that polar codes achieve capacity. The scaling exponent captures the speed of convergence as $N$ increases. 

\begin{definition}[Upper bound on scaling exponent]\label{def:upscal}
We say that $\mu$ is an \emph{upper bound on the scaling exponent} if there exists a function $h(x): [0, 1] \to [0, 1]$ such that $h(0)=h(1)=0$, $h(x)>0$ for any $x\in (0, 1)$, and
\begin{equation}\label{eq:suph}
\displaystyle\sup_{\substack{x\in (0, 1), y \in [x\sqrt{2-x^2}, 2x-x^2]}}\displaystyle\frac{h(x^2)+h(y)}{2h(x)} < 2^{-1/\mu}.
\end{equation}
\end{definition}

\begin{definition}[Upper bound on scaling exponent of BEC]\label{def:ubscalBEC}
We say that $\mu$ is an \emph{upper bound on the scaling exponent of BEC} if there exists a function $h(x): [0, 1] \to [0, 1]$ such that $h(0)=h(1)=0$, $h(x)>0$ for any $x\in (0, 1)$, and
\begin{equation}\label{eq:suphBEC}
\displaystyle\sup_{x\in (0, 1)}\displaystyle\frac{h(x^2)+h(2x-x^2)}{2h(x)} < 2^{-1/\mu}.
\end{equation}
\end{definition}

The definitions above are motivated by \cite[Theorem 1]{MHU15unif-ieeeit}, where it is shown that if $\mu$ is an upper bound on the scaling exponent according to Definition \ref{def:upscal}, then the gap to capacity $I(W)-R$ scales with the block length as $N^{-1/\mu}$. Note that, when the transmission channel is a BEC, then $Z(W^0)= 2Z(W)-Z(W)^2$. Consequently, the condition \eqref{eq:suph} is replaced by \eqref{eq:suphBEC}, see Definition~\ref{def:ubscalBEC}. Valid choices of upper bounds on the scaling exponent are $\mu=4.714$ and $\mu=3.639$ for the special case of BEC, as shown in \cite[Theorem 2]{MHU15unif-ieeeit}.

\subsection{Construction}

\begin{definition}[Polar code construction]\label{def:construction}
Let $p_{\rm e}\in (0, 1)$, $W$ be a BMS channel, and $N=2^n$ be the polar code block length. Then the polar code $\mathcal C_{\rm polar}(p_{\rm e}, W, N)$ is obtained by placing the information bits into the positions corresponding to all the synthetic channels whose Bhattacharyya parameter is less than $p_{\rm e}/N$ and by freezing the remaining positions.
\end{definition}

The construction rule of Definition \ref{def:construction} ensures that the error probability under SC decoding  is \emph{at most} $p_{\rm e}$. In fact, the error probability can be upper bounded by the sum of the Bhattacharyya parameters of the synthetic channels associated with the information bits (cf. Proposition 2 of \cite{Ari09}), and each of them is at most $p_{\rm e}/N$. Furthermore, this construction rule also ensures that the rate $R$ of the code tends to capacity at a speed captured by the scaling exponent. In particular, by using \cite[Theorem 1]{MHU15unif-ieeeit}, the gap to capacity $I(W)-R$ is $O(N^{-1/\mu})$, where $\mu$ is an upper bound on the scaling exponent according to Definition \ref{def:upscal} (for the special case of BEC, see Definition \ref{def:ubscalBEC}).

\subsection{Successive Cancellation Decoding}

SC decoding can be equated to passing messages on a binary tree, as shown in Figure~\ref{fig:scDec}, with priority given to the left branches. Two kinds of messages are passed between the nodes at the binary tree: the logarithmic likelihood ratio (LLR) values that are passed from the top to the bottom of the tree, and the hard bit estimations that are passed from the bottom to the top of the tree. At each node at level $s$ of the SC decoding tree, the LLR values $\bm{\alpha} = \{\alpha_1,\ldots,\alpha_{2^{s+1}}\}$ are received from a parent node at level $s+1$. The LLR values $\bm{\alpha}$ are used to calculate the LLR values of the left child node $\bm{\alpha}^{\ell} = \{\alpha^{\ell}_1,\ldots,\alpha^{\ell}_{2^{s}}\}$ and the right child node $\bm{\alpha}^{\text{r}} = \{\alpha^{\text{r}}_1,\ldots,\alpha^{\text{r}}_{2^{s}}\}$. Furthermore, the hard bit estimations $\bm{\beta} = \{\beta_1,\ldots,\beta_{2^{s+1}}\}$ are calculated based on the hard bit estimations that are received from the left child node $\bm{\beta}^{\ell} = \{\beta^{\ell}_1,\ldots,\beta^{\ell}_{2^{s}}\}$ and the right child node $\bm{\beta}^{\text{r}} = \{\beta^{\text{r}}_1,\ldots,\beta^{\text{r}}_{2^{s}}\}$ in accordance with
\begin{align}
    \alpha^{\ell}_i &= f^{\ell}_s(\alpha_i,\alpha_{i+2^s}) \text{,} \label{eq:flFunc}\\
    \alpha^{\text{r}}_i &= f^{\text{r}}_s(\alpha_i,\alpha_{i+2^s},\beta^{\ell}_i) \text{,} \label{eq:frFunc}\\
    \beta_i &=
    \begin{cases}
    \beta^{\ell}_i \oplus \beta^{\text{r}}_i, & \text{if } i \leq 2^s \text{,} \\
    \beta^{\text{r}}_{i-2^s}, & \text{otherwise,}
    \end{cases}
\end{align}
where $\oplus$ is the XOR operator and the functions $f^{\ell}_s$ and $f^{\text{r}}_s$ are defined as
\begin{align}
    f^{\ell}_s(a,b) &= \ln{\frac{1+e^{a+b}}{e^a+e^b}} \text{,} \\
    f^{\text{r}}_s(a,b,c) &= b+(1-2c)a \text{.}
\end{align}
At a leaf node of the SC decoding tree, each bit $\hat{u}_i$ is estimated as
\begin{equation}
    \hat{u}_i =
    \begin{cases}
    0, & \text{if $u_i$ is a frozen bit or $\alpha^0_i>0$,} \\
    1, & \text{otherwise,}
    \end{cases}
\end{equation}
where $\alpha^0_i$ is the calculated LLR value of $u_i$. The value of $\hat{u}_i$ is used to update the hard bit estimations at the higher levels of the decoding tree.

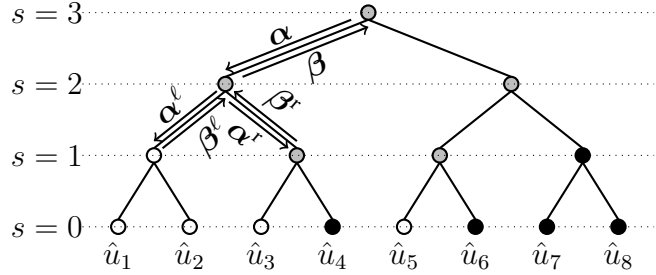
\begin{figure}[t]
\centering
\begin{tikzpicture}[scale=1.9, thick]
  \draw [fill=lightgray] (0,0) circle [radius=.05];

  \draw [fill=lightgray] (-1,-.5) circle [radius=.05];
  \draw [fill=lightgray] (1,-.5) circle [radius=.05];

  \draw (-1.5,-1) circle [radius=.05];
  \draw [fill=lightgray] (-.5,-1) circle [radius=.05];
  \draw [fill=lightgray] (.5,-1) circle [radius=.05];
  \draw [fill=black] (1.5,-1) circle [radius=.05];

  \draw (-1.75,-1.5) circle [radius=.05];
  \draw (-1.25,-1.5) circle [radius=.05];
  \draw (-.75,-1.5) circle [radius=.05];
  \draw [fill=black] (-.25,-1.5) circle [radius=.05];
  \draw (.25,-1.5) circle [radius=.05];
  \draw [fill=black] (.75,-1.5) circle [radius=.05];
  \draw [fill=black] (1.25,-1.5) circle [radius=.05];
  \draw [fill=black] (1.75,-1.5) circle [radius=.05];

  \node at (-1.75,-1.7) {$\hat{u}_1$};
  \node at (-1.25,-1.7) {$\hat{u}_2$};
  \node at (-.75,-1.7) {$\hat{u}_3$};
  \node at (-.25,-1.7) {$\hat{u}_4$};
  \node at (.25,-1.7) {$\hat{u}_5$};
  \node at (.75,-1.7) {$\hat{u}_6$};
  \node at (1.25,-1.7) {$\hat{u}_7$};
  \node at (1.75,-1.7) {$\hat{u}_8$};

  \draw (0,-.05) -- (-1,-.45);
  \draw (0,-.05) -- (1,-.45);

  \draw (-1,-.55) -- (-1.5,-.95);
  \draw (-1,-.55) -- (-.5,-.95);
  \draw (1,-.55) -- (.5,-.95);
  \draw (1,-.55) -- (1.5,-.95);

  \draw (-1.5,-1.05) -- (-1.75,-1.45);
  \draw (-1.5,-1.05) -- (-1.25,-1.45);
  \draw (-.5,-1.05) -- (-.75,-1.45);
  \draw (-.5,-1.05) -- (-.25,-1.45);
  \draw (.5,-1.05) -- (.25,-1.45);
  \draw (.5,-1.05) -- (.75,-1.45);
  \draw (1.5,-1.05) -- (1.25,-1.45);
  \draw (1.5,-1.05) -- (1.75,-1.45);

  \draw [thin,dotted] (-2,0) -- (2,0);
  \draw [thin,dotted] (-2,-.5) -- (2,-.5);
  \draw [thin,dotted] (-2,-1) -- (2,-1);
  \draw [thin,dotted] (-2,-1.5) -- (2,-1.5);

  \node at (-2.25,0) {$s=3$};
  \node at (-2.25,-.5) {$s=2$};
  \node at (-2.25,-1) {$s=1$};
  \node at (-2.25,-1.5) {$s=0$};

  \draw [->] (-.12,-.05) -- (-1,-.4) node [above=-.1cm,midway,rotate=25] {$\bm{\alpha}$};
  \draw [->] (-.88,-.45) -- (0,-.1) node [below=-.1cm,midway,rotate=25] {$\bm{\beta}$};

  \draw [->] (-1.06,-.55) -- (-1.5,-.9) node [above=-.1cm,midway,rotate=40] {$\bm{\alpha}^{\ell}$};
  \draw [->] (-1.44,-.95) -- (-1.0,-0.6) node [below=-.1cm,midway,rotate=40] {$\bm{\beta}^{\ell}$};

  \draw [<-] (-.94,-.55) -- (-.5,-.9) node [above=-.1cm,midway,rotate=-40] {$\bm{\beta}^{\text{r}}$};
  \draw [<-] (-.56,-.95) -- (-0.975,-.625) node [below=-.1cm,midway,rotate=-40] {$\bm{\alpha}^{\text{r}}$};

\end{tikzpicture}
\caption{SC decoding tree for a polar code with $N=8$ and $R=1/2$. The white nodes represent Rate-0 nodes, the black nodes represent Rate-1 nodes, and the gray nodes are neither Rate-0 nodes nor Rate-1 nodes.}
\label{fig:scDec}
\end{figure}

SC has a sequential structure in the sense that the decoding of each bit is dependent on the decoding of its previous bits. More formally, while the function $f^{\ell}_s$ at level $s$ is only dependent on the LLR values that are received from a parent node ($\alpha_i$ and $\alpha_{i+2^s}$), the function $f^{\text{r}}_s$ is also dependent on a hard bit estimation ($\beta^{\ell}_i$) that is a result of estimating the previous bits (see (\ref{eq:flFunc}) and (\ref{eq:frFunc})). As a result, SC decoding proceeds by traversing the binary tree such that the nodes at level $s=0$ are visited from left to right. For example, in the SC decoding tree of Figure~\ref{fig:scDec} for a polar code of length $N=8$, the following schedule in performing $f^{\ell}_s$ and $f^{\text{r}}_s$ will complete the decoding process:
\begin{equation}
    \text{channel} \rightarrow f^{\ell}_2 \rightarrow f^{\ell}_1 \rightarrow f^{\ell}_0 \rightarrow f^{\text{r}}_0 \rightarrow f^{\text{r}}_1 \rightarrow f^{\ell}_0 \rightarrow f^{\text{r}}_0 \rightarrow f^{\text{r}}_2 \rightarrow f^{\ell}_1 \rightarrow f^{\ell}_0 \rightarrow f^{\text{r}}_0 \rightarrow f^{\text{r}}_1 \rightarrow f^{\ell}_0 \rightarrow f^{\text{r}}_0 \text{,} \label{eq:sch}
\end{equation}
where ``channel'' refers to the time step needed to retrieve channel LLR values. 

Note that the operations at each node of the tree can be performed in parallel. Thus, in a fully parallel SC decoder architecture \cite{Ari09}, the scheduling in (\ref{eq:sch}) for a polar code of length $N$ results in $2N-1$ time steps. This corresponds to the number of nodes in the SC decoding tree.

\subsection{Simplified Successive Cancellation Decoding}\label{subsec:simplified}

The sequential decoding nature of SC decoding results in high latency and low throughput when used to decode polar codes. An SSC decoding algorithm was proposed in \cite{alamdar} by identifying two types of nodes in the SC decoding tree that can be decoded efficiently without traversing their child nodes. These two node types are defined as follows:
\begin{itemize}
    \item \emph{Rate-0 node}: A Rate-0 node at level $s$ of the SC decoding tree is such that all its leaf nodes at level $0$ are frozen bits. Since the values of frozen bits are known to the decoder, there is no need to traverse the decoding tree below Rate-0 nodes and the hard bit estimations can be directly calculated at level $s$ where the Rate-0 node is located. For a Rate-0 node at level $s$ we have
    \begin{equation}
        \beta^s_i = 0 \text{,}
    \end{equation}
    where $\beta^s_i$ is the hard bit estimation of the $i$-th bit.
    \item \emph{Rate-1 node}: A Rate-1 node at level $s$ of the SC decoding tree is such that all its leaf nodes at level $0$ are information bits. It was shown in \cite{alamdar} that there is no need to traverse the decoding tree below Rate-1 nodes and the hard bit estimations can be directly calculated at level $s$ where the Rate-1 node is located. For a Rate-1 node at level $s$ we have
    \begin{equation}
        \beta^s_i =
        \begin{cases}
        0 & \text{if $\alpha^s_i>0$,}\\
        1 & \text{otherwise,}
        \end{cases}
    \end{equation}
    where $\beta^s_i$ is the hard bit estimation and $\alpha^s_i$ is the LLR value.
\end{itemize}
In fact, SSC decoding can decode Rate-0 and Rate-1 nodes in a single time step. In a binary tree representation of SC decoding, this corresponds to pruning all the nodes that are the descendants of a Rate-0 node or a Rate-1 node. This is illustrated in Figure~\ref{fig:sscDec} for the same example as in Figure~\ref{fig:scDec}. The SSC decoding schedule for decoding the example in Figure~\ref{fig:sscDec} is:
\begin{equation}
    \text{channel} \rightarrow f^{\ell}_2 \rightarrow f^{\ell}_1 \rightarrow f^{\text{r}}_1 \rightarrow f^{\ell}_0 \rightarrow f^{\text{r}}_0 \rightarrow f^{\text{r}}_2 \rightarrow f^{\ell}_1 \rightarrow f^{\ell}_0 \rightarrow f^{\text{r}}_0 \rightarrow f^{\text{r}}_1 \text{,} \label{eq:schSSC}
\end{equation}
which requires four fewer time steps in comparison with the required number of time steps for SC decoding in (\ref{eq:sch}). For practical code lengths, SSC has significantly lower latency than SC decoding \cite{alamdar}. This is due to the fact that the number of nodes in the SSC decoding tree is significantly less than the number of nodes in the SC decoding tree.

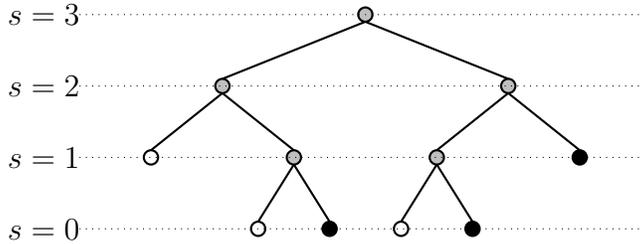
\begin{figure}[t]
\centering
\begin{tikzpicture}[scale=1.9, thick]
  \draw [fill=lightgray] (0,0) circle [radius=.05];

  \draw [fill=lightgray] (-1,-.5) circle [radius=.05];
  \draw [fill=lightgray] (1,-.5) circle [radius=.05];

  \draw (-1.5,-1) circle [radius=.05];
  \draw [fill=lightgray] (-.5,-1) circle [radius=.05];
  \draw [fill=lightgray] (.5,-1) circle [radius=.05];
  \draw [fill=black] (1.5,-1) circle [radius=.05];

  \draw (-.75,-1.5) circle [radius=.05];
  \draw [fill=black] (-.25,-1.5) circle [radius=.05];
  \draw (.25,-1.5) circle [radius=.05];
  \draw [fill=black] (.75,-1.5) circle [radius=.05];

  \draw (0,-.05) -- (-1,-.45);
  \draw (0,-.05) -- (1,-.45);

  \draw (-1,-.55) -- (-1.5,-.95);
  \draw (-1,-.55) -- (-.5,-.95);
  \draw (1,-.55) -- (.5,-.95);
  \draw (1,-.55) -- (1.5,-.95);

  \draw (-.5,-1.05) -- (-.75,-1.45);
  \draw (-.5,-1.05) -- (-.25,-1.45);
  \draw (.5,-1.05) -- (.25,-1.45);
  \draw (.5,-1.05) -- (.75,-1.45);

  \draw [thin,dotted] (-2,0) -- (2,0);
  \draw [thin,dotted] (-2,-.5) -- (2,-.5);
  \draw [thin,dotted] (-2,-1) -- (2,-1);
  \draw [thin,dotted] (-2,-1.5) -- (2,-1.5);

  \node at (-2.25,0) {$s=3$};
  \node at (-2.25,-.5) {$s=2$};
  \node at (-2.25,-1) {$s=1$};
  \node at (-2.25,-1.5) {$s=0$};
\end{tikzpicture}
\caption{SSC decoding tree for a polar code with $N=8$ and $R=1/2$. Note that Rate-0 and Rate-1 nodes in the SC decoding tree are pruned to get the SSC decoding tree.}
\label{fig:sscDec}
\end{figure}

\section{Upper Bound on the Latency of SSC Decoding} \label{sec:main}

\begin{theorem}[Sublinear latency with SSC decoding]\label{th:main}
Let $p_{\rm e}\in (0, 1)$, $W$ be a BMS channel, and $N=2^n$ be the polar code block length. Consider the polar code $\mathcal C_{\rm polar}(p_{\rm e}, W, N)$ constructed according to Definition \ref{def:construction}. Let $\mu$ be an upper bound on the scaling exponent according to Definition \ref{def:upscal}. Then, the latency of the SSC decoder is $O(N^{1-1/\mu})$.
\end{theorem}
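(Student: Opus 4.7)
The plan is to bound the latency by counting the nodes of the SSC decoding tree and then the mixed nodes of the full SC tree from which those SSC nodes arise.

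First, I would observe that the SSC tree is a full binary tree whose internal nodes are precisely the \emph{mixed} nodes of the SC tree, i.e., nodes having both information and frozen bits among their $2^s$ level-$0$ descendants. This uses the hereditary property that a Rate-$0$ (resp.\ Rate-$1$) node's descendants are all Rate-$0$ (resp.\ Rate-$1$). For a full binary tree with $L$ leaves the total number of nodes is $2L-1$, and one can show $L = 1+\sum_{s=1}^n M_s$, where $M_s$ denotes the number of mixed nodes at level $s$ in the SC tree. Since the latency equals the total number of SSC nodes, Theorem~\ref{th:main} reduces to proving $\sum_{s=1}^n M_s = O(N^{1-1/\mu})$.

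Second, I would derive a Bhattacharyya-based necessary condition for a node to be mixed. Writing $z = Z_{n-s}$ for the Bhattacharyya parameter at a level-$s$ node, iterating \eqref{eq:plusB} shows that the smallest Bhattacharyya among the $2^s$ level-$0$ descendants is $z^{2^s}$ (achieved by the all-``$+$'' path), so any mixed node satisfies $z \le (p_{\rm e}/N)^{1/2^s}$; otherwise it would be Rate-$0$. Similarly, iterating the upper bound in \eqref{eq:minusB} gives $1-(1-z)^{2^s}$ as an upper bound on the maximum descendant Bhattacharyya, which implies that any node with $z$ less than roughly $p_{\rm e}/(N\,2^s)$ is Rate-$1$. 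Hence mixedness at level $s$ forces
\[
Z_{n-s} \in \bigl[\,p_{\rm e}/(N\,2^s),\ (p_{\rm e}/N)^{1/2^s}\,\bigr],
\]
which I expect to be the content of the first deferred lemma.

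Third, I would estimate $M_s \le 2^{n-s}\cdot \Pr\bigl[Z_{n-s} \in [p_{\rm e}/(N2^s),\,(p_{\rm e}/N)^{1/2^s}]\bigr]$ using the scaling-exponent estimate \cite[Theorem 1]{MHU15unif-ieeeit}, which provides, for any fixed $[a,b]\subset(0,1)$, a bound of the form $\Pr[Z_m\in[a,b]] \le C_{a,b}\cdot 2^{-m/\mu}$. Whenever the mixed range at level $s$ is contained in such a fixed sub-interval of $(0,1)$, this yields $M_s = O(2^{(n-s)(1-1/\mu)})$, whose geometric sum over $s$ is $O(N^{1-1/\mu})$. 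The second deferred lemma is almost certainly a refinement needed to cope with the fact that the mixed range depends on $s$ and $N$---its endpoints escape any fixed sub-interval of $(0,1)$ once $s$ is of order $\log n$ or more---and must combine the scaling-exponent bound with direct estimates on the polarization process near its fixed points $0$ and $1$.

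The main obstacle is exactly this escape of the mixed range: a naive two-regime split into ``small'' and ``large'' $s$ with the trivial $M_s\le 2^{n-s}$ on the large-$s$ side does not balance, since one cannot simultaneously keep the upper endpoint of the mixed range bounded away from $1$ and make the trivial tail of $\sum_s 2^{n-s}$ small enough to fit into $O(N^{1-1/\mu})$. Controlling the portion of the mixed range near the polarization fixed points---which is precisely where the two deferred lemmas carry the technical weight---is the delicate step of the argument.
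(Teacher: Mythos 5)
Your reduction to counting mixed nodes, and your necessary condition for mixedness (iterating \eqref{eq:plusB} and the upper bound in \eqref{eq:minusB} to show that a mixed node at level $s$ must have $Z_{n-s}\in[p_{\rm e}/(N2^s),\,(p_{\rm e}/N)^{2^{-s}}]$), are both correct; the latter plays the role of the paper's Lemma~\ref{lemma:rate01}, which establishes the same dichotomy with the simpler thresholds $N^{-3}$ and $1-N^{-3}$. The genuine gap is in your third step: the entire technical content of the theorem is the probabilistic estimate that you leave as a black box. The fixed-interval bound $\mathbb{P}(Z_m\in[a,b])\le C_{a,b}\,2^{-m/\mu}$ from \cite{MHU15unif-ieeeit} is, as you yourself note, never applicable uniformly in $n$ (the lower endpoint of the mixed range is $\approx 2^{-(n+s)}$ at every level), and your sketch stops at the remark that some refinement ``near the fixed points'' must exist, without stating or proving it. What is actually needed is the paper's Lemma~\ref{lemma:unpolarized}: $\mathbb{P}(Z_m\in[2^{-\nu m},1-2^{-\nu m}])\le c(\nu)\,2^{-m/\mu}$ with $c(\nu)$ independent of $m$ and $W$, i.e.\ the excluded neighborhoods of $0$ and $1$ are allowed to shrink exponentially in the depth at any fixed rate $\nu>1$. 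Proving this requires reworking the argument of \cite[Theorem 1]{MHU15unif-ieeeit}: one bounds $\mathbb{E}[(Z_m(1-Z_m))^{\gamma}]$ for an exponent $\gamma$ chosen small as a function of $\nu$ (cf.\ \eqref{eq:defalpha}), so that the factor $2^{\nu\gamma m}$ lost in Markov's inequality at the moving endpoints is exactly absorbed. None of this is present, or even precisely formulated, in your proposal, and it is where the real work of the theorem lies.

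A further point: your closing claim that a two-regime split ``does not balance'' is an artifact of working with fixed intervals; once Lemma~\ref{lemma:unpolarized} is available, the split does balance, and your architecture then closes even more directly than the paper's own proof. At depth $m=n-s\ge n/2$, a single fixed $\nu$ (e.g.\ $\nu=4$) makes the mixed range $[p_{\rm e}2^{-(n+s)},(p_{\rm e}/N)^{2^{-s}}]$ a subset of $[2^{-\nu m},1-2^{-\nu m}]$ for all large $n$, giving $M_s\le c\,2^{(n-s)(1-1/\mu)}$ there, while the trivial bound $M_s\le 2^{n-s}$ for $m<n/2$ contributes $O(N^{1/2})=O(N^{1-1/\mu})$ since $\mu>2$; summing yields $O(N^{1-1/\mu})$ with no extra loss. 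By contrast, the paper prunes in $K$ successive rounds of lengths $\delta_k n$ with $\delta_{k+1}=\delta_k/\mu$, obtains $O(N^{1-1/\mu+\epsilon})$, and then removes $\epsilon$ by perturbing $\mu$ using the strictness of \eqref{eq:suph}. So your plan is a viable, arguably cleaner, alternative route---but as submitted it assumes away the one lemma that carries the proof.
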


\begin{remark}[Sublinear latency with SSC decoding for BEC]
For the special case of BEC, the latency of the SSC decoder is $O(N^{1-1/\mu})$, where $\mu$ is an upper bound on the scaling exponent of BEC according to Definition \ref{def:ubscalBEC}.
\end{remark}

The proof relies on two intermediate results, which are stated below and proved in Appendix~\ref{app:proofs}. The first intermediate result provides an accurate bound on the fraction of synthetic channels that are un-polarized in the sense that their Bhattacharyya parameters are not too small and not too large. A tighter result holds for BEC, see Remark \ref{rmk:BEC} in Appendix~\ref{app:proofs}.

\begin{lemma}[Number of un-polarized channels]\label{lemma:unpolarized}
Let $W$ be a BMS channel and let 
$Z_n=Z(W_n)$ be the random process that tracks the Bhattacharyya parameter of $W_n$. Let $\mu$ be an upper bound on the scaling exponent according to Definition \ref{def:upscal}. Fix a constant $\nu>1$. Then, for $n\ge 1$, 
\begin{equation}\label{eq:nummid}
    \mathbb P(Z_n\in [2^{-\nu n}, 1-2^{- \nu n}]) \le c\,2^{-n/\mu},
\end{equation}
where the constant $c$ depends solely on $\nu$ and it does not depend on $n$ or $W$.
\end{lemma}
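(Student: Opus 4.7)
The plan is to treat the function $h$ from Definition~\ref{def:upscal} as a Lyapunov functional for the Bhattacharyya process, and then to apply Markov's inequality after first modifying $h$ so that it has a quantitative lower bound on the shrinking target interval.

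First, I would convert Definition~\ref{def:upscal} into a one-step contraction. Conditioning on $Z_{n-1}$ and unpacking (\ref{eq:eqBMSC}),
\begin{equation*}
\mathbb{E}[h(Z_n) \mid Z_{n-1}] \,=\, \tfrac{1}{2}h(Z_{n-1}^2) + \tfrac{1}{2}h(Z_n^0) \,\le\, 2^{-1/\mu}\,h(Z_{n-1}),
\end{equation*}
where $Z_n^0 \in [Z_{n-1}\sqrt{2-Z_{n-1}^2},\,2Z_{n-1}-Z_{n-1}^2]$ and the inequality is exactly (\ref{eq:suph}). Iterating $n$ times and using $h(Z_0) \le 1$ yields $\mathbb{E}[h(Z_n)] \le 2^{-n/\mu}$. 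Because the supremum in (\ref{eq:suph}) is strict, a small amount of slack is available: there is $\tilde{\mu}<\mu$ for which $\mathbb{E}[h(Z_n)] \le 2^{-n/\tilde\mu}$, and I would exploit this slack below.

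Second, since $h(0)=h(1)=0$ a naive Markov bound with $h$ itself cannot deliver the bound for an interval that shrinks with $n$. I would therefore replace $h$ by a continuous modification $\tilde{h}$ that behaves polynomially near the endpoints. Direct inspection of (\ref{eq:suph}) shows that the trial profile $x \mapsto x^{\gamma}$ as $x \to 0^{+}$ yields a ratio tending to $2^{\gamma-1}$, which lies strictly below $2^{-1/\mu}$ whenever $\gamma<1-1/\mu$; the symmetric computation with $(1-x)^\gamma$ near $x=1$ gives the same conclusion. I would then define $\tilde{h}$ to agree with $x^{\gamma}$ on a small neighborhood of $0$, with $(1-x)^{\gamma}$ on a small neighborhood of $1$, and with a suitable scalar multiple of $h$ on the remaining compact subinterval of $(0,1)$; scalars are chosen so the pieces join continuously. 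A uniform check of (\ref{eq:suph}) for $\tilde{h}$ (which still gives a strict bound $<2^{-1/\mu}$, and hence some $\tilde{\mu}<\mu$) then gives $\mathbb{E}[\tilde{h}(Z_n)] \le 2^{-n/\tilde\mu}$, with the crucial new feature that
\begin{equation*}
\min_{x \in [2^{-\nu n},\,1-2^{-\nu n}]} \tilde{h}(x) \,\ge\, C\,2^{-\gamma\nu n}
\end{equation*}
for some constant $C>0$ depending only on $\gamma$. Markov's inequality applied to $\tilde{h}(Z_n)$ then yields
\begin{equation*}
\mathbb{P}\bigl(Z_n \in [2^{-\nu n},\,1-2^{-\nu n}]\bigr) \,\le\, \frac{\mathbb{E}[\tilde{h}(Z_n)]}{C\,2^{-\gamma\nu n}} \,\le\, \frac{1}{C}\,2^{-n(1/\tilde\mu - \gamma\nu)},
\end{equation*}
and choosing $\gamma$ small enough to satisfy both $\gamma<1-1/\mu$ and $\gamma\nu \le 1/\tilde\mu - 1/\mu$ (possible since $1/\tilde\mu - 1/\mu > 0$) makes the exponent at most $-n/\mu$, yielding the claim with a constant $c=c(\nu)$ independent of $n$ and $W$.

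The main obstacle I anticipate is the construction of $\tilde{h}$: one must verify (\ref{eq:suph}) uniformly over the interpolation region between the polynomial regimes near the endpoints and the rescaled $h$ on the compact middle subinterval, including the mixed case in which $x$ lies in one regime and some element $y$ of the admissible interval $[x\sqrt{2-x^2},2x-x^2]$ lies in another. This is a careful but essentially local computation, and it relies crucially on the strictness of the inequality (\ref{eq:suph}) to absorb the interpolation error.
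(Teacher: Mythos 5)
Your overall architecture---a Lyapunov (supermartingale) bound for the Bhattacharyya process plus Markov's inequality, applied to a modification of $h$ that admits a polynomial lower bound on the shrinking interval $[2^{-\nu n},1-2^{-\nu n}]$---is sound, and it is the same mechanism as the paper's proof. The difference is that the paper does not construct such a function explicitly: it invokes \cite[Lemma~6]{MHU15unif-ieeeit} to control $\mathbb{E}\bigl[(Z_n(1-Z_n))^{\gamma}\bigr]$, and then applies Markov's inequality with $\gamma$ chosen as a function of $\nu$ and of the slack in (\ref{eq:suph}) (see (\ref{eq:defalpha})--(\ref{eq:defdelta})), which is exactly your exponent bookkeeping. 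So your plan amounts to re-deriving that cited input via an explicit patched function $\tilde h$, and the correctness of your proposal stands or falls with that re-derivation.

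That re-derivation is where essentially all the work lies, and as sketched it does not go through. A single scalar multiple of $h$ on the middle interval cannot in general be joined continuously to both endpoint pieces (two matching conditions, one scalar); this is cosmetic, since continuity is not actually needed. The substantive issue is the seam analysis you defer: at a seam the two mixed cases pull the normalizations in opposite directions. When $x$ lies in the polynomial region but some admissible $y\in[x\sqrt{2-x^2},\,2x-x^2]$ lies in the middle region, keeping the ratio in (\ref{eq:suph}) below $2^{-1/\mu}$ forces the $h$-piece to be small relative to the polynomial piece; when $x$ lies in the middle region while $x^2$ falls in the polynomial region, it forces the opposite. Reconciling the two requires a uniform positive lower bound on $h$ over a compact subinterval of $(0,1)$ whose endpoints depend on $\gamma$---and note that this subinterval grows as $\gamma\to 0$, because for fixed $x$ the ratio produced by the profile $x^{\gamma}$ tends to $1$, so the region where that profile is admissible shrinks toward the endpoint. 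Definition~\ref{def:upscal} does not supply such a bound: $h$ is only assumed positive pointwise, not continuous, so your construction needs either this extra regularity (true for the explicit functions behind $\mu=4.714$ and $\mu=3.639$, but not part of the definition) or a preliminary reduction. There is also a circularity in your parameter ordering: you choose $\gamma$ using the slack $1/\tilde\mu-1/\mu$, but $\tilde\mu$ is determined by the $\gamma$-dependent construction of $\tilde h$; you must fix the target drift first (e.g.\ strictly between $\max(1/2,\theta)$ and $2^{-1/\mu}$, where $\theta$ is the supremum in (\ref{eq:suph})) and then verify that the drift of $\tilde h_\gamma$ meets it uniformly for all sufficiently small $\gamma$. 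These defects are repairable---for instance, take $\tilde h=h+\kappa\,(x(1-x))^{\gamma}$, use that $x^2(1-x^2)\le 2x(1-x)$ and $y(1-y)\le 2x(1-x)$ for every admissible $y$, so the polynomial part never grows by more than a factor $2^{\gamma}$ per step and genuinely contracts near the edges, and pick $\kappa$ small compared with the infimum of $h$ on the middle interval---but carrying this out carefully is precisely the content of the lemma the paper cites, so in its current form the proposal assumes the key step rather than proving it.
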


If the transmission channel is almost noiseless or very noisy, then the rate of the corresponding polar code is $1$ or $0$, respectively. The second intermediate result quantifies this statement by providing sufficient conditions on the Bhattacharyya parameter of the channel so that the polar code has either rate $0$ or rate $1$.

\begin{lemma}[Sufficient condition for Rate-0 and Rate-1 nodes]\label{lemma:rate01}
Let $W$ be a BMS channel, $p_{\rm e}\in (0, 1)$, $N=2^n$, and $M=2^m$ with $m<n$. Consider the polar code $\mathcal C_{\rm polar}(p_{\rm e}/M, W, N/M)$ constructed according to Definition \ref{def:construction}. Then, there exists an integer $n_0$, which depends on $p_{\rm e}$, such that for $n\ge n_0$, the following holds:
\begin{enumerate}
    \item If $Z(W)\le 1/N^3$, then the polar code $\mathcal C_{\rm polar}(p_{\rm e}/M, W, N/M)$ has rate $1$.
    \item If $Z(W)\ge 1-1/N^3$, then the polar code $\mathcal C_{\rm polar}(p_{\rm e}/M, W, N/M)$ has rate $0$.
\end{enumerate} 
\end{lemma}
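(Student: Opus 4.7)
The plan is to reduce both claims to uniform bounds on the Bhattacharyya parameters of all $N$ synthetic channels, so that Definition \ref{def:construction} can be applied directly. For Case 1, I would show that \emph{every} synthetic channel satisfies $Z(W_n^{(i)}) \le p_{\rm e}/N$, which forces all positions to carry information bits; for Case 2, I would show that \emph{every} synthetic channel satisfies $Z(W_n^{(i)}) > p_{\rm e}/N$, which forces all positions to be frozen.

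For Case 1, the crude inequalities $Z(W^1) = Z(W)^2 \le Z(W)$ and $Z(W^0) \le 2Z(W) - Z(W)^2 \le 2Z(W)$, both immediate consequences of \eqref{eq:minusB}--\eqref{eq:plusB}, show that each polarization step multiplies the Bhattacharyya parameter by at most $2$. Iterating $n$ times and using the hypothesis $Z(W) \le 1/N^3$ gives $\max_i Z(W_n^{(i)}) \le 2^n\, Z(W) = N\cdot Z(W) \le 1/N^2$. Since $1/N^2 \le p_{\rm e}/N$ as soon as $N \ge 1/p_{\rm e}$, it suffices to take any $n_0$ with $2^{n_0} \ge 1/p_{\rm e}$.

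For Case 2, the key observation is the opposite monotonicity: since $Z(W^0) \ge Z(W)\sqrt{2-Z(W)^2} \ge Z(W)$ for every $Z(W)\in[0,1]$, the $W\mapsto W^0$ step never decreases $Z$, and only the squaring step $W\mapsto W^1$ can strictly decrease it. Consequently, the minimum Bhattacharyya parameter over all $N$ synthetic channels is attained by the all-$1$ path and equals $Z(W)^{2^n}=Z(W)^N$. Combining $Z(W) \ge 1 - 1/N^3$ with Bernoulli's inequality $(1-x)^N \ge 1-Nx$ yields $\min_i Z(W_n^{(i)}) \ge 1 - 1/N^2$, which exceeds $p_{\rm e}/N$ for $N$ large enough (concretely whenever $N^2 - p_{\rm e}N - 1 > 0$, which already holds at $N=2$). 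Taking $n_0$ to be the maximum of the two thresholds arising in the two cases completes the proof.

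The argument is essentially mechanical, so I do not anticipate a real obstacle. The only quantitative point to get right is the matching between the $N^{-3}$ slack in the hypothesis and the $p_{\rm e}/N$ threshold of Definition \ref{def:construction}; the resulting factor-of-$N$ margin (we actually produce uniform bounds $1/N^2$ and $1-1/N^2$) is exactly what allows the loose recursion bounds above to suffice, without needing to invoke the sharper halves of \eqref{eq:minusB}.
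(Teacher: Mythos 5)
Your proposal is correct and takes essentially the same route as the paper: both arguments reduce each case to a uniform worst-case bound on all $N$ synthetic channels via the extremal recursions in \eqref{eq:eqBMSC} (growth at most $Z\mapsto 2Z-Z^2$, decay exactly/at worst $Z\mapsto Z^2$), and then compare the resulting bounds of order $1/N^2$ and $1-1/N^2$ with the threshold $p_{\rm e}/N$ of Definition~\ref{def:construction}. The only differences are in the elementary estimates (your factor-$2$-per-step bound and Bernoulli's inequality versus the paper's bound $1-(1-1/N^3)^N\le N/(N^3-1)$ via $\ln$/$\exp$ inequalities), which are immaterial.
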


At this point, we are ready to prove Theorem~\ref{th:main}.

\begin{proof}[Proof of Theorem~\ref{th:main}]
From the discussion in Section~\ref{subsec:simplified}, it suffices to show that the number of nodes of the SSC decoding tree is $O(N^{1-1/\mu})$. As the block length of the code is $N=2^n$, the synthetic channels go through $n$ steps of polarization, or equivalently, the depth of the decoding tree is $n$. These $n$ polarization steps are divided into $K$ rounds. For $k\in \{1, \ldots, K\}$, the $k$-th round contains $\delta_k n$ polarization steps, with $\sum_{k=1}^K\delta_k=1$. The idea is that, at the end of each round, the number of un-polarized synthetic channels is given by Lemma~\ref{lemma:unpolarized}. The remaining synthetic channels are polarized in the sense that their Bhattacharyya parameter is very close to $0$ or to $1$. Thus, by Lemma~\ref{lemma:rate01}, these synthetic channels lead to Rate-0 or Rate-1 nodes, which can be pruned. 

More formally, after $\delta_1 n$ steps of polarization, there are a total of $N^{\delta_1}$ synthetic channels. By applying Lemma~\ref{lemma:unpolarized} with $\nu=3/\delta_1$, we have that at most $c(\delta_1) N^{\delta_1(1-1/\mu)}$ of these channels have a Bhattacharyya parameter that belongs to the interval $[N^{-3}, 1-N^{-3}]$. $c(\delta_1)$ is a constant that depends uniquely on $\delta_1$ (and not on $N$ or $W$). The Bhattacharyya parameter of the remaining synthetic channels is either smaller than $1/N^3$ or larger than $1-1/N^3$. Thus, by applying Lemma~\ref{lemma:rate01} with $M=N^{\delta_1}$, these remaining synthetic channels are Rate-0 or Rate-1 nodes, and they can be pruned. After pruning, the remaining number of nodes is
\begin{equation}\label{eq:remaining1}
    O(N^{\delta_1}+N^{\delta_1(1-1/\mu)+1-\delta_1}).
\end{equation}
In fact, the term $O(N^{\delta_1})$ in \eqref{eq:remaining1} comes from the fact that no pruning takes places in the first $\delta_1 n-1$ steps of polarization; and the term $O(N^{\delta_1(1-1/\mu)+1-\delta_1})$ comes from the fact that, after pruning, there are $O(N^{\delta_1(1-1/\mu)})$ remaining nodes at depth $\delta_1 n$, and each of these nodes is the root of a tree containing $2N^{1-\delta_1}-1$ nodes.   

The same procedure is repeated with each of the $O(N^{\delta_1(1-1/\mu)})$ remaining nodes at depth $\delta_1 n$. Consider one of these nodes. After $\delta_2 n$ steps of polarization, there are a total of $N^{\delta_2}$ synthetic channels. By applying Lemma~\ref{lemma:unpolarized} with $\nu=3/\delta_2$, we have that at most $c(\delta_2) N^{\delta_2(1-1/\mu)}$ of these channels have a Bhattacharyya parameter that belongs to the interval $[N^{-3}, 1-N^{-3}]$. The Bhattacharyya parameter of the remaining synthetic channels is either smaller than $1/N^3$ or larger than $1-1/N^3$. Thus, by applying Lemma~\ref{lemma:rate01} with $M=N^{\delta_1+\delta_2}$, these remaining synthetic channels are Rate-0 or Rate-1 nodes, and they can be pruned. The remaining number of nodes is given by
\begin{equation}\label{eq:remaining2}
    O(N^{\delta_1}+N^{\delta_1(1-1/\mu)+\delta_2}+N^{(\delta_1+\delta_2)(1-1/\mu)+1-\delta_1-\delta_2}).
\end{equation}
In fact, the term $O(N^{\delta_1})$ in \eqref{eq:remaining2} is the same as in \eqref{eq:remaining1}; the term $O(N^{\delta_1(1-1/\mu)+\delta_2})$ comes from the fact that we have $O(N^{\delta_1(1-1/\mu)})$ remaining nodes at depth $\delta_1 n$, and no pruning takes place in the following $\delta_2 n-1$ steps of polarization; and the term $O(N^{(\delta_1+\delta_2)(1-1/\mu)+1-\delta_1-\delta_2})$ comes from the fact that, after pruning, there remain $O(N^{(\delta_1+\delta_2)(1-1/\mu)})$ nodes at depth $(\delta_1+\delta_2)\, n$, and each of these nodes is the root of a tree containing $2N^{1-\delta_1-\delta_2}-1$ nodes.

By doing $K$ rounds of this pruning procedure, the remaining number of nodes is given by 
\begin{equation}\label{eq:remaining3}
    O\left(\sum_{k=0}^{K-1} N^{(1-1/\mu)\sum_{j=1}^k\delta_j + \delta_{k+1} }\right).
\end{equation}
In fact, the term $O(N^{\delta_1})$ in \eqref{eq:remaining3} comes from the fact that no pruning takes places in the first $\delta_1 n-1$ steps of polarization; and, for $k\in\{1, \ldots, K-1\}$, the term $O(N^{(1-1/\mu)\sum_{j=1}^k\delta_j + \delta_{k+1} })$ comes from the fact that there are $O(N^{(1-1/\mu)\sum_{j=1}^k\delta_j})$ remaining nodes at depth $n\cdot \sum_{j=1}^k\delta_j$, and no pruning takes place in the following $\delta_{k+1} n-1$ steps of polarization.

The remaining number of nodes is given by \eqref{eq:remaining3} for any choice of $\{\delta_k\}_{k=1}^K$ such that \begin{equation}\label{eq:sumdeltas}
\sum_{k=1}^K \delta_k=1,    
\end{equation}
as the total number of polarization steps is $n$. Thus, the $\delta_k$'s are chosen in order to minimize the quantity in \eqref{eq:remaining3}. This choice requires that the exponents of $N$ in the various terms of the sum are all equal, which leads to
\begin{equation}\label{eq:deltas}
\delta_{k+1} = \delta_k/\mu, \qquad \forall \, k\in \{1, \ldots, K-1\}. 
\end{equation}
By combining \eqref{eq:deltas} with \eqref{eq:sumdeltas}, the optimal choice for the $\delta_k$'s is
\begin{equation}\label{eq:deltakfin}
    \delta_k = \frac{1}{\mu^{k-1}\displaystyle\sum_{i=0}^K\frac{1}{\mu^{i}}}.
\end{equation}
Let us emphasize that $K$ is a fixed constant which does not depend on $n$. Thus, for $k\in \{1, \ldots, K\}$, $\delta_k$ given by \eqref{eq:deltakfin} also does not depend on $n$, and there exists an integer $n_0(k)$ such that, for $n\ge n_0(k)$, the result of Lemma \ref{lemma:rate01} holds.

Consequently, the bound in \eqref{eq:remaining3} becomes
\begin{equation}\label{eq:remainingfin}
    O\left(N^{1/\sum_{i=0}^K\frac{1}{\mu^{i}}}\right),
\end{equation}
where the big-$O$ notation hides a constant that depends solely on $K$ and on $\delta_k$ for $k\in \{1, \ldots, K\}$. Also,
\begin{equation}
    \sum_{i=0}^\infty\frac{1}{\mu^{i}} = \frac{1}{1-1/\mu}.
\end{equation}
Thus, by taking $K$ sufficiently large, the number of nodes of the SSC decoding tree is $O(N^{1-1/\mu+\epsilon(K)})$, where $\epsilon(K)$ depends on $K$ and can be made arbitrarily small.

Define $\mu'$ such that $1/\mu' = 1/\mu+\epsilon(K)$, and note that the inequality \eqref{eq:suph} in Definition \ref{def:upscal} is strict. Then, for $\epsilon(K)$ sufficiently small, $\mu'$ is also an upper bound on the scaling exponent according to Definition \ref{def:upscal}, and the number of nodes of the SSC decoding tree is $O(N^{1-1/\mu'+\epsilon(K)})=O(N^{1-1/\mu})$. By using the same argument, without loss of generality, then $\mu$ is a rational number. This implies that, when $n$ is sufficiently large, $\delta_k n\in \mathbb N$ for any $k\in\{1, \ldots, K\}$, and the proof is complete. 
\end{proof}

\section{Numerical Results}\label{sec:numerical}

This section evaluates numerically the latency savings of SSC decoding relative to SC decoding, to support its sublinear latency that was proved in Theorem~\ref{th:main}. To this end, polar codes are constructed according to Definition~\ref{def:construction} and the latency $\mathcal{L}$ of the underlying decoding algorithm is calculated by counting the number of nodes in the corresponding decoding tree. Figures \ref{fig:BECres}, \ref{fig:AWGNres}, and \ref{fig:BSCres} plot the logarithm of the latency ($\log_2 \mathcal{L}$) of SC and SSC decoding as a function of $n = \log_2 N$ for $0\leq n\leq 27$. The plots consider three different families of channels: BEC in Figure~\ref{fig:BECres}, BAWGNC in Figure~\ref{fig:AWGNres}, and BSC in Figure~\ref{fig:BSCres}. For each family of channels, in the plot on the left, the channel capacity $I(W)$ is fixed to $0.5$ and the latency savings of SSC decoding is compared for two values of $p_{\rm e}$, namely, $p_{\rm e}=10^{-3}$ and $p_{\rm e}=10^{-10}$. In the plot on the right, $p_{\rm e}$ is fixed to $10^{-3}$ and the latency savings of SSC decoding is compared for three values of $I(W)$, namely, $I(W)=0.1$, $I(W)=0.5$, and $I(W)=0.9$.

The asymptotic slope of the line corresponding to the logarithm of the latency of SC decoding is $1$. In fact, the latency of SC decoding of a polar code of length $N$ is given by $2N-1$. Conversely, the asymptotic slope of the line that corresponds to the logarithm of the latency of SSC decoding is lower than $1$. Furthermore, in all the settings taken into account, this asymptotic slope is close to $1-1/\mu$. Recall that $\mu\approx 3.63$ (and $1-1/\mu\approx 0.72$) for BEC, $\mu\approx 4$ (and $1-1/\mu\approx 0.75$) for BAWGNC \cite{KMTU10}, and it is conjectured that $\mu\approx 4.2$ (and $1-1/\mu\approx 0.76$) for BSC. These asymptotic slopes are represented in the dashed blue lines in the plots. Consequently, the numerical results of Figures \ref{fig:BECres}, \ref{fig:AWGNres}, and \ref{fig:BSCres} suggest that the bound of Theorem~\ref{th:main} is tight. The latency tends to be smaller for smaller values of $p_{\rm e}$ and of $I(W)$ when the block length is not too large. However, the difference between the curves computed for different values of $p_{\rm e}$ and $I(W)$ tends to vanish as the block length increases.

\begin{figure*}[t]
\centering
\begin{subfigure}{.48\textwidth}
    \centering
    \begin{tikzpicture}

\begin{axis}[
scale=1,
xmin=0,
xmax=27,
ymin=0,
ymax=30,
grid=both,
ymajorgrids=true,
xmajorgrids=true,
grid style=dashed,
width=\textwidth, height=7.5cm,
xlabel={$n$},
ylabel={$\log_2 \mathcal{L}$},
ylabel shift=-7,
legend cell align={left},
legend pos=north west,
legend style={
	column sep= 1mm,
	font=\fontsize{9pt}{9}\selectfont,
},
legend to name=legend-BEC5,
legend columns=2,
]

\node (comment1) at (axis cs:12,25){slope $=1$};
\draw[->] (comment1) -- (axis cs:25,26);

\node (comment) at (axis cs:18,5){slope $=0.72$};
\draw[->] (comment) -- (axis cs:25,20);

\addplot[
color=black,
thick,
]
table {
0  0
1  1.584963
2  2.807355
3  3.906891
4  4.954196
5  5.977280
6  6.988685
7  7.994353
8  8.997179
9  9.998590
10 10.999295
11 11.999648
12 12.999824
13 13.999912
14 14.999956
15 15.999978
16 16.999989
17 17.999994
18 18.999997
19 19.999999
20 20.999999
21 22.000000
22 23.000000
23 24.000000
24 25.000000
25 26.000000
26 27.000000
27 28.000000
};
\addlegendentry{SC}

\addplot[
color=blue,
thick,
]
table {
0  0
1  0
2  0
3  0
4  3.169925001442312151e+00
5  3.169925001442312151e+00
6  4.954196310386874913e+00
7  6.022367813028454364e+00
8  6.820178962415187840e+00
9  7.761551232444479531e+00
10 8.675957032941749247e+00
11 9.556506054671928041e+00
12 1.041679752760606092e+01
13 1.124257868945134575e+01
14 1.204678297035635026e+01
15 1.285155385842977793e+01
16 1.363945376809673782e+01
17 1.442999510300424681e+01
18 1.521663219602045913e+01
19 1.599189823692343637e+01
20 1.676101874564537653e+01
21 1.752772897918726969e+01
22 1.828869938626520764e+01
23 1.904654053306329331e+01
24 1.980145734299838978e+01
25 2.055595680427493477e+01
26 2.130626699765168297e+01
27 2.205552903495953387e+01
};
\addlegendentry{SSC ($p_{\rm e} = 10^{-3}$)}

\addplot[
color=red,
thick,
]
table {
0  0
1  0
2  0
3  0
4  0
5  0
6  3.700439718141092182e+00
7  5.044394119358453388e+00
8  6.409390936137701722e+00
9  7.434628227636724596e+00
10 8.447083226209652906e+00
11 9.509775004326936454e+00
12 1.047065887406055218e+01
13 1.139285403987287282e+01
14 1.226062555606822713e+01
15 1.311618148643118253e+01
16 1.393378284877537254e+01
17 1.473888391977510004e+01
18 1.553390761937179754e+01
19 1.632268517021771359e+01
20 1.710366687591252699e+01
21 1.787316170769939561e+01
22 1.864022728095296344e+01
23 1.940212313449090686e+01
24 2.015886728228723612e+01
25 2.091170138199329998e+01
26 2.166153500168453050e+01
27 2.240851037527678358e+01
};
\addlegendentry{SSC ($p_{\rm e} = 10^{-10}$)}

\addplot[
color=blue,
dashed,
domain=10:27,
samples=10,
smooth,
]
{(1-1/3.63)*x+3.3};

\end{axis}
\end{tikzpicture}
    \ref{legend-BEC5}
	\caption{$I(W) = 0.5$}
	\label{fig:BECres5}
\end{subfigure}
\centering
\begin{subfigure}{.48\textwidth}
    \centering
    \begin{tikzpicture}

\begin{axis}[
scale=1,
xmin=0,
xmax=27,
ymin=0,
ymax=30,
grid=both,
ymajorgrids=true,
xmajorgrids=true,
grid style=dashed,
width=\textwidth, height=7.5cm,
xlabel={$n$},
ylabel={$\log_2 \mathcal{L}$},
ylabel shift=-7,
legend cell align={left},
legend pos=north west,
legend style={
	column sep= 1mm,
	font=\fontsize{9pt}{9}\selectfont,
},
legend to name=legend-BECcomp,
legend columns=2,
]

\node (comment1) at (axis cs:12,25){slope $=1$};
\draw[->] (comment1) -- (axis cs:25,26);

\node (comment) at (axis cs:18,5){slope $=0.72$};
\draw[->] (comment) -- (axis cs:25,20);

\addplot[
color=black,
thick,
]
table {
0  0
1  1.584963
2  2.807355
3  3.906891
4  4.954196
5  5.977280
6  6.988685
7  7.994353
8  8.997179
9  9.998590
10 10.999295
11 11.999648
12 12.999824
13 13.999912
14 14.999956
15 15.999978
16 16.999989
17 17.999994
18 18.999997
19 19.999999
20 20.999999
21 22.000000
22 23.000000
23 24.000000
24 25.000000
25 26.000000
26 27.000000
27 28.000000
};
\addlegendentry{SC}

\addplot[
color=brown,
thick,
]
table {
0  0
1  0
2  0
3  0
4  0
5  0
6  0
7  3.906890595608518701e+00
8  3.906890595608518701e+00
9  5.209453365628950117e+00
10 6.475733430966397641e+00
11 7.417852514885898252e+00
12 8.396604781181858712e+00
13 9.278449458220482171e+00
14 1.022037832769522758e+01
15 1.114019070319183591e+01
16 1.201854794187165076e+01
17 1.286708564661683596e+01
18 1.369642424057887808e+01
19 1.453192967793801138e+01
20 1.535572961655076973e+01
21 1.616571177748645738e+01
22 1.696286212401254190e+01
23 1.775920551840231099e+01
24 1.854640842679112467e+01
25 1.933132592091497415e+01
26 2.010794879327808360e+01
27 2.088174953564698200e+01
};
\addlegendentry{SSC ($I(W) = 0.1$)}

\addplot[
color=blue,
thick,
]
table {
0  0
1  0
2  0
3  0
4  3.169925001442312151e+00
5  3.169925001442312151e+00
6  4.954196310386874913e+00
7  6.022367813028454364e+00
8  6.820178962415187840e+00
9  7.761551232444479531e+00
10 8.675957032941749247e+00
11 9.556506054671928041e+00
12 1.041679752760606092e+01
13 1.124257868945134575e+01
14 1.204678297035635026e+01
15 1.285155385842977793e+01
16 1.363945376809673782e+01
17 1.442999510300424681e+01
18 1.521663219602045913e+01
19 1.599189823692343637e+01
20 1.676101874564537653e+01
21 1.752772897918726969e+01
22 1.828869938626520764e+01
23 1.904654053306329331e+01
24 1.980145734299838978e+01
25 2.055595680427493477e+01
26 2.130626699765168297e+01
27 2.205552903495953387e+01
};
\addlegendentry{SSC ($I(W) = 0.5$)}

\addplot[
color=green!50!black,
thick,
]
table {
0  0
1  0
2  2.321928094887362182e+00
3  2.807354922057604174e+00
4  4.247927513443585212e+00
5  5.044394119358453388e+00
6  5.614709844115208348e+00
7  6.658211482751794641e+00
8  7.467605550082997645e+00
9  8.243173983472951605e+00
10 9.014020470314934030e+00
11 9.753216749178955425e+00
12 1.043567026093655059e+01
13 1.117180228868298286e+01
14 1.191699904908280772e+01
15 1.263639804888383722e+01
16 1.335631440692812077e+01
17 1.408339622038513816e+01
18 1.479365410750623511e+01
19 1.551067146263640417e+01
20 1.622755954003473633e+01
21 1.694339621881450952e+01
22 1.766225808137203046e+01
23 1.837834142976758756e+01
24 1.909666111417160295e+01
25 1.981940106240384836e+01
26 2.054247174711013102e+01
27 2.126616487487542884e+01
};
\addlegendentry{SSC ($I(W) = 0.9$)}

\addplot[
color=blue,
dashed,
domain=10:27,
samples=10,
smooth,
]
{(1-1/3.63)*x+3};

\end{axis}
\end{tikzpicture}
    \ref{legend-BECcomp}
	\caption{$p_{\rm e} = 10^{-3}$}
	\label{fig:BECrescomp}
\end{subfigure}
	\caption{Latency of SC and SSC decoding of polar codes constructed according to Definition~\ref{def:construction} when $W$ is a BEC.}
	\label{fig:BECres}
\end{figure*}
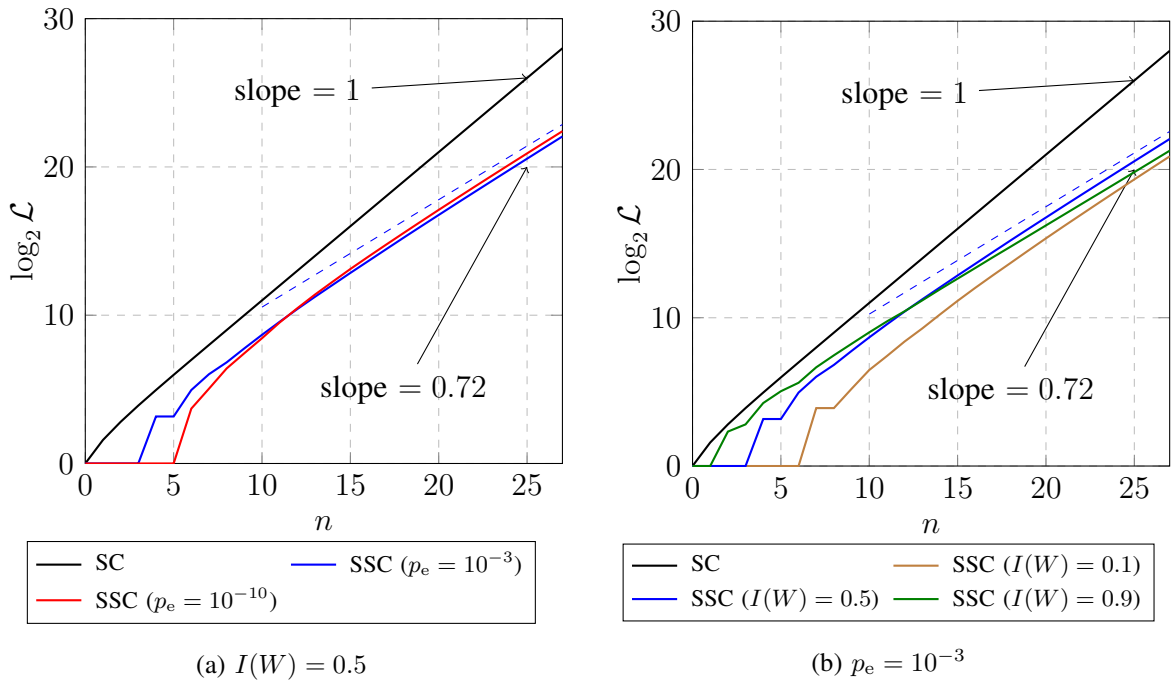

\begin{figure*}[t]
\centering
\begin{subfigure}{.48\textwidth}
    \centering
    \begin{tikzpicture}

\begin{axis}[
scale=1,
xmin=0,
xmax=27,
ymin=0,
ymax=30,
grid=both,
ymajorgrids=true,
xmajorgrids=true,
grid style=dashed,
width=\textwidth, height=7.5cm,
xlabel={$n$},
ylabel={$\log_2 \mathcal{L}$},
ylabel shift=-7,
legend cell align={left},
legend pos=north west,
legend style={
	column sep= 1mm,
	font=\fontsize{9pt}{9}\selectfont,
},
legend to name=legend-AWGN5,
legend columns=2,
]

\node (comment1) at (axis cs:12,25){slope $=1$};
\draw[->] (comment1) -- (axis cs:25,26);

\node (comment) at (axis cs:18,5){slope $=0.75$};
\draw[->] (comment) -- (axis cs:25,20);

\addplot[
color=black,
thick,
]
table {
0  0
1  1.584963
2  2.807355
3  3.906891
4  4.954196
5  5.977280
6  6.988685
7  7.994353
8  8.997179
9  9.998590
10 10.999295
11 11.999648
12 12.999824
13 13.999912
14 14.999956
15 15.999978
16 16.999989
17 17.999994
18 18.999997
19 19.999999
20 20.999999
21 22.000000
22 23.000000
23 24.000000
24 25.000000
25 26.000000
26 27.000000
27 28.000000
};
\addlegendentry{SC}

\addplot[
color=blue,
thick,
]
table {
0  0
1  0
2  0
3  0
4  0
5  3.459431618637297312e+00
6  4.087462841250339096e+00
7  5.285402218862248702e+00
8  6.507794640198696356e+00
9  7.451211111832328982e+00
10 8.388017285345135576e+00
11 9.269126679149417569e+00
12 1.013057056280542589e+01
13 1.098228060455828370e+01
14 1.182137539237253954e+01
15 1.265664831174735028e+01
16 1.347078595576430082e+01
17 1.427663333088670328e+01
18 1.506865233891319278e+01
19 1.585941345668450708e+01
20 1.663703519100529604e+01
21 1.740647907283767992e+01
22 1.817561289715512629e+01
23 1.894365089933583945e+01
24 1.970479356355991385e+01
25 2.046368899950802600e+01
26 2.122038009912986212e+01
27 2.197507091502399845e+01
};
\addlegendentry{SSC ($p_{\rm e} = 10^{-3}$)}

\addplot[
color=red,
thick,
]
table {
0  0
1  0
2  0
3  0
4  0
5  0
6  3.700439718141092182e+00
7  4.247927513443585212e+00
8  5.727920454563198760e+00
9  7.055282435501189831e+00
10 8.224001674198104794e+00
11 9.226412192788785660e+00
12 1.015860968821447763e+01
13 1.107748335685950813e+01
14 1.195383268214528094e+01
15 1.283229667874988422e+01
16 1.368266546424580987e+01
17 1.452399068248440983e+01
18 1.534634058829055725e+01
19 1.615384886828920585e+01
20 1.694639230502695071e+01
21 1.773263780453225280e+01
22 1.850880450142529199e+01
23 1.927982138787777444e+01
24 2.004473962760934924e+01
25 2.080511957881320129e+01
26 2.156194468334127023e+01
27 2.231595137360994840e+01
};
\addlegendentry{SSC ($p_{\rm e} = 10^{-10}$)}

\addplot[
color=blue,
dashed,
domain=10:27,
samples=10,
smooth,
]
{(1-1/4.0)*x+2.5};

\end{axis}
\end{tikzpicture}
    \ref{legend-AWGN5}
	\caption{$I(W) = 0.5$}
	\label{fig:AWGNres5}
\end{subfigure}
\centering
\begin{subfigure}{.48\textwidth}
    \centering
    \begin{tikzpicture}

\begin{axis}[
scale=1,
xmin=0,
xmax=27,
ymin=0,
ymax=30,
grid=both,
ymajorgrids=true,
xmajorgrids=true,
grid style=dashed,
width=\textwidth, height=7.5cm,
xlabel={$n$},
ylabel={$\log_2 \mathcal{L}$},
ylabel shift=-7,
legend cell align={left},
legend pos=north west,
legend style={
	column sep= 1mm,
	font=\fontsize{9pt}{9}\selectfont,
},
legend to name=legend-AWGNcomp,
legend columns=2,
]

\node (comment1) at (axis cs:12,25){slope $=1$};
\draw[->] (comment1) -- (axis cs:25,26);

\node (comment) at (axis cs:18,5){slope $=0.75$};
\draw[->] (comment) -- (axis cs:25,20);

\addplot[
color=black,
thick,
]
table {
0  0
1  1.584963
2  2.807355
3  3.906891
4  4.954196
5  5.977280
6  6.988685
7  7.994353
8  8.997179
9  9.998590
10 10.999295
11 11.999648
12 12.999824
13 13.999912
14 14.999956
15 15.999978
16 16.999989
17 17.999994
18 18.999997
19 19.999999
20 20.999999
21 22.000000
22 23.000000
23 24.000000
24 25.000000
25 26.000000
26 27.000000
27 28.000000
};
\addlegendentry{SC}

\addplot[
color=brown,
thick,
]
table {
0  0
1  0
2  0
3  0
4  0
5  0
6  0
7  0
8  4.087462841250339096e+00
9  4.857980995127571866e+00
10 5.832890014164741288e+00
11 6.870364719583404778e+00
12 7.948367231584677839e+00
13 8.921840937074490441e+00
14 9.789533644970360271e+00
15 1.067154106671188352e+01
16 1.156843119316745927e+01
17 1.245301335446631441e+01
18 1.330791230249294799e+01
19 1.415473901176099147e+01
20 1.497284520002758335e+01
21 1.580011627300073584e+01
22 1.660699792384004425e+01
23 1.741304111988607062e+01
24 1.820588391125845718e+01
25 1.899346639738936204e+01
26 1.977548007587450840e+01
27 2.055200389940937455e+01
};
\addlegendentry{SSC ($I(W) = 0.1$)}

\addplot[
color=blue,
thick,
]
table {
0  0
1  0
2  0
3  0
4  0
5  3.459431618637297312e+00
6  4.087462841250339096e+00
7  5.285402218862248702e+00
8  6.507794640198696356e+00
9  7.451211111832328982e+00
10 8.388017285345135576e+00
11 9.269126679149417569e+00
12 1.013057056280542589e+01
13 1.098228060455828370e+01
14 1.182137539237253954e+01
15 1.265664831174735028e+01
16 1.347078595576430082e+01
17 1.427663333088670328e+01
18 1.506865233891319278e+01
19 1.585941345668450708e+01
20 1.663703519100529604e+01
21 1.740647907283767992e+01
22 1.817561289715512629e+01
23 1.894365089933583945e+01
24 1.970479356355991385e+01
25 2.046368899950802600e+01
26 2.122038009912986212e+01
27 2.197507091502399845e+01
};
\addlegendentry{SSC ($I(W) = 0.5$)}

\addplot[
color=green!50!black,
thick,
]
table {
0  0
1  0
2  0
3  2.807354922057604174e+00
4  3.169925001442312151e+00
5  4.857980995127571866e+00
6  5.491853096329674777e+00
7  6.375039431346924523e+00
8  7.238404739325078552e+00
9  8.174925682500678192e+00
10 9.052568050804152833e+00
11 9.871905237659186483e+00
12 1.064835758200966609e+01
13 1.143619099548141627e+01
14 1.220914871035574656e+01
15 1.297351800470371685e+01
16 1.373290936193115819e+01
17 1.450140182330193284e+01
18 1.525100153076417620e+01
19 1.600244146250286548e+01
20 1.675450914167941363e+01
21 1.749982253526091469e+01
22 1.824346688957258777e+01
23 1.898712797568486366e+01
24 1.972801847485433413e+01
25 2.046888948882870451e+01
26 2.120763697832870776e+01
27 2.194637052866776727e+01
};
\addlegendentry{SSC ($I(W) = 0.9$)}

\addplot[
color=blue,
dashed,
domain=10:27,
samples=10,
smooth,
]
{(1-1/4.0)*x+2.2};

\end{axis}
\end{tikzpicture}
    \ref{legend-AWGNcomp}
	\caption{$p_{\rm e} = 10^{-3}$}
	\label{fig:AWGNrescomp}
\end{subfigure}
	\caption{Latency of SC and SSC decoding of polar codes constructed according to Definition~\ref{def:construction} when $W$ is a BAWGNC.}
	\label{fig:AWGNres}
\end{figure*}

\begin{figure*}[t]
\centering
\begin{subfigure}{.48\textwidth}
    \centering
    \begin{tikzpicture}

\begin{axis}[
scale=1,
xmin=0,
xmax=27,
ymin=0,
ymax=30,
grid=both,
ymajorgrids=true,
xmajorgrids=true,
grid style=dashed,
width=\textwidth, height=7.5cm,
xlabel={$n$},
ylabel={$\log_2 \mathcal{L}$},
ylabel shift=-7,
legend cell align={left},
legend pos=north west,
legend style={
	column sep= 1mm,
	font=\fontsize{9pt}{9}\selectfont,
},
legend to name=legend-BSC5,
legend columns=2,
]

\node (comment1) at (axis cs:12,25){slope $=1$};
\draw[->] (comment1) -- (axis cs:25,26);

\node (comment) at (axis cs:18,5){slope $=0.76$};
\draw[->] (comment) -- (axis cs:25,20);

\addplot[
color=black,
thick,
]
table {
0  0
1  1.584963
2  2.807355
3  3.906891
4  4.954196
5  5.977280
6  6.988685
7  7.994353
8  8.997179
9  9.998590
10 10.999295
11 11.999648
12 12.999824
13 13.999912
14 14.999956
15 15.999978
16 16.999989
17 17.999994
18 18.999997
19 19.999999
20 20.999999
21 22.000000
22 23.000000
23 24.000000
24 25.000000
25 26.000000
26 27.000000
27 28.000000
};
\addlegendentry{SC}

\addplot[
color=blue,
thick,
]
table {
0  0
1  0
2  0
3  0
4  0
5  3.459431618637297312e+00
6  4.087462841250339096e+00
7  5.285402218862248702e+00
8  6.409390936137701722e+00
9  7.483815777264256397e+00
10 8.370687406807217457e+00
11 9.197216693110052077e+00
12 1.006474276475025675e+01
13 1.094909715572920739e+01
14 1.181498293642679087e+01
15 1.262182294739420207e+01
16 1.344047989881121907e+01
17 1.423369456554691226e+01
18 1.503217490229382491e+01
19 1.581545835083124629e+01
20 1.660074050438925752e+01
21 1.737353640716244740e+01
22 1.814266052948796215e+01
23 1.891275779561843606e+01
24 1.967743921421040554e+01
25 2.043903848732817252e+01
26 2.119554943325594465e+01
27 2.195090576136351146e+01
};
\addlegendentry{SSC ($p_{\rm e} = 10^{-3}$)}

\addplot[
color=red,
thick,
]
table {
0  0
1  0
2  0
3  0
4  0
5  0
6  3.584962500721156076e+00
7  4.584962500721156076e+00
8  5.754887502163468227e+00
9  7.022367813028454364e+00
10 8.129283016944967244e+00
11 9.179909090014934492e+00
12 1.010066233900519883e+01
13 1.106069593168755461e+01
14 1.189860138740393758e+01
15 1.273682469909886450e+01
16 1.358167135857814323e+01
17 1.439807562290986809e+01
18 1.521219237190847196e+01
19 1.601227534485612125e+01
20 1.680367708178865982e+01
21 1.758369982497557515e+01
22 1.835381018308623169e+01
23 1.911841391269988932e+01
24 1.987774544797731124e+01
25 2.063334293967313471e+01
26 2.138677714671722896e+01
27 2.213748526220379276e+01
};
\addlegendentry{SSC ($p_{\rm e} = 10^{-10}$)}

\addplot[
color=blue,
dashed,
thick,
domain=10:27,
samples=10,
smooth,
]
{(1-1/4.2)*x+2};

\end{axis}
\end{tikzpicture}
    \ref{legend-BSC5}
	\caption{$I(W) = 0.5$}
	\label{fig:BSCres5}
\end{subfigure}
\centering
\begin{subfigure}{.48\textwidth}
    \centering
    \begin{tikzpicture}

\begin{axis}[
scale=1,
xmin=0,
xmax=27,
ymin=0,
ymax=30,
grid=both,
ymajorgrids=true,
xmajorgrids=true,
grid style=dashed,
width=\textwidth, height=7.5cm,
xlabel={$n$},
ylabel={$\log_2 \mathcal{L}$},
ylabel shift=-7,
legend cell align={left},
legend pos=north west,
legend style={
	column sep= 1mm,
	font=\fontsize{9pt}{9}\selectfont,
},
legend to name=legend-BSCcomp,
legend columns=2,
]

\node (comment1) at (axis cs:12,25){slope $=1$};
\draw[->] (comment1) -- (axis cs:25,26);

\node (comment) at (axis cs:18,5){slope $=0.76$};
\draw[->] (comment) -- (axis cs:25,20);

\addplot[
color=black,
thick,
]
table {
0  0
1  1.584963
2  2.807355
3  3.906891
4  4.954196
5  5.977280
6  6.988685
7  7.994353
8  8.997179
9  9.998590
10 10.999295
11 11.999648
12 12.999824
13 13.999912
14 14.999956
15 15.999978
16 16.999989
17 17.999994
18 18.999997
19 19.999999
20 20.999999
21 22.000000
22 23.000000
23 24.000000
24 25.000000
25 26.000000
26 27.000000
27 28.000000
};
\addlegendentry{SC}

\addplot[
color=brown,
thick,
]
table {
0  0
1  0
2  0
3  0
4  0
5  0
6  0
7  0
8  4.087462841250339096e+00
9  4.857980995127571866e+00
10 5.832890014164741288e+00
11 6.894817763307943714e+00
12 7.936637939002570974e+00
13 8.927777962082341645e+00
14 9.782998208920414385e+00
15 1.067507492038544470e+01
16 1.154254826233521314e+01
17 1.242862188609111129e+01
18 1.329073886243934766e+01
19 1.412468677365149539e+01
20 1.495587689984139423e+01
21 1.577398911590227470e+01
22 1.659305550542027774e+01
23 1.739034531811393336e+01
24 1.818395031638913295e+01
25 1.897036713518178530e+01
26 1.975371785187162388e+01
27 2.053056350602889424e+01
};
\addlegendentry{SSC ($I(W) = 0.1$)}

\addplot[
color=blue,
thick,
]
table {
0  0
1  0
2  0
3  0
4  0
5  3.459431618637297312e+00
6  4.087462841250339096e+00
7  5.285402218862248702e+00
8  6.409390936137701722e+00
9  7.483815777264256397e+00
10 8.370687406807217457e+00
11 9.197216693110052077e+00
12 1.006474276475025675e+01
13 1.094909715572920739e+01
14 1.181498293642679087e+01
15 1.262182294739420207e+01
16 1.344047989881121907e+01
17 1.423369456554691226e+01
18 1.503217490229382491e+01
19 1.581545835083124629e+01
20 1.660074050438925752e+01
21 1.737353640716244740e+01
22 1.814266052948796215e+01
23 1.891275779561843606e+01
24 1.967743921421040554e+01
25 2.043903848732817252e+01
26 2.119554943325594465e+01
27 2.195090576136351146e+01
};
\addlegendentry{SSC ($I(W) = 0.5$)}

\addplot[
color=green!50!black,
thick,
]
table {
0  0
1  0
2  0
3  2.807354922057604174e+00
4  3.169925001442312151e+00
5  4.643856189774724363e+00
6  5.781359713524659938e+00
7  6.599912842187127815e+00
8  7.383704292474051911e+00
9  8.243173983472951605e+00
10 9.068778277985412473e+00
11 9.862637357558794449e+00
12 1.063390340934851963e+01
13 1.141943355070537791e+01
14 1.218951572980019371e+01
15 1.293571755935994183e+01
16 1.369054368199859262e+01
17 1.442907977030915134e+01
18 1.518351908043128518e+01
19 1.592321140397268131e+01
20 1.666396108195772285e+01
21 1.740406881809547812e+01
22 1.814316893841166234e+01
23 1.888159496209824795e+01
24 1.961663678404383049e+01
25 2.035163913755190279e+01
26 2.108611095047603712e+01
27 2.182209191834190065e+01
};
\addlegendentry{SSC ($I(W) = 0.9$)}

\addplot[
color=blue,
dashed,
thick,
domain=10:27,
samples=10,
smooth,
]
{(1-1/4.2)*x+2};

\end{axis}
\end{tikzpicture}
    \ref{legend-BSCcomp}
	\caption{$p_{\rm e} = 10^{-3}$}
	\label{fig:BSCrescomp}
\end{subfigure}
	\caption{Latency of SC and SSC decoding of polar codes constructed according to Definition~\ref{def:construction} when $W$ is a BSC.}
	\label{fig:BSCres}
\end{figure*}

The effect of using Fast-SSC decoding \cite{sarkis} is also evaluated in the numerical results. This is a pruning technique of the SC decoding tree where two additional constituent codes are introduced; these constituent codes can be decoded in parallel in a single time step. As a result, the nodes in the SC decoding tree which correspond to these two nodes can also be pruned. These additional nodes are:
\begin{itemize}
    \item \emph{Repetition (Rep) node}: A Rep node in the SC decoding tree is such that all its leaf nodes at level $0$ are frozen bits except for the rightmost leaf node, which is an information bit.
    \item \emph{Single parity-check (SPC) node}: A SPC node in the SC decoding tree is such that all its leaf nodes at level $0$ are information bits except for the leftmost leaf node, which is a frozen bit.
\end{itemize}
Figure~\ref{fig:resAll} shows the logarithm of the latency for Fast-SSC decoding in comparison with that of SC decoding and SSC decoding. Similar to the previous numerical result, this plot considers the cases in which $W$ is a BEC, a BAWGNC, and a BSC. For all three channels it can be seen that, at finite code lengths, Fast-SSC decoding brings significant latency savings compared to SSC decoding. However, the asymptotic slope of $\log_2 \mathcal{L}$ for Fast-SSC decoding is close to that of SSC decoding. Therefore, we conjecture that, asymptotically, most of the savings in latency comes from pruning Rate-0 and Rate-1 nodes.

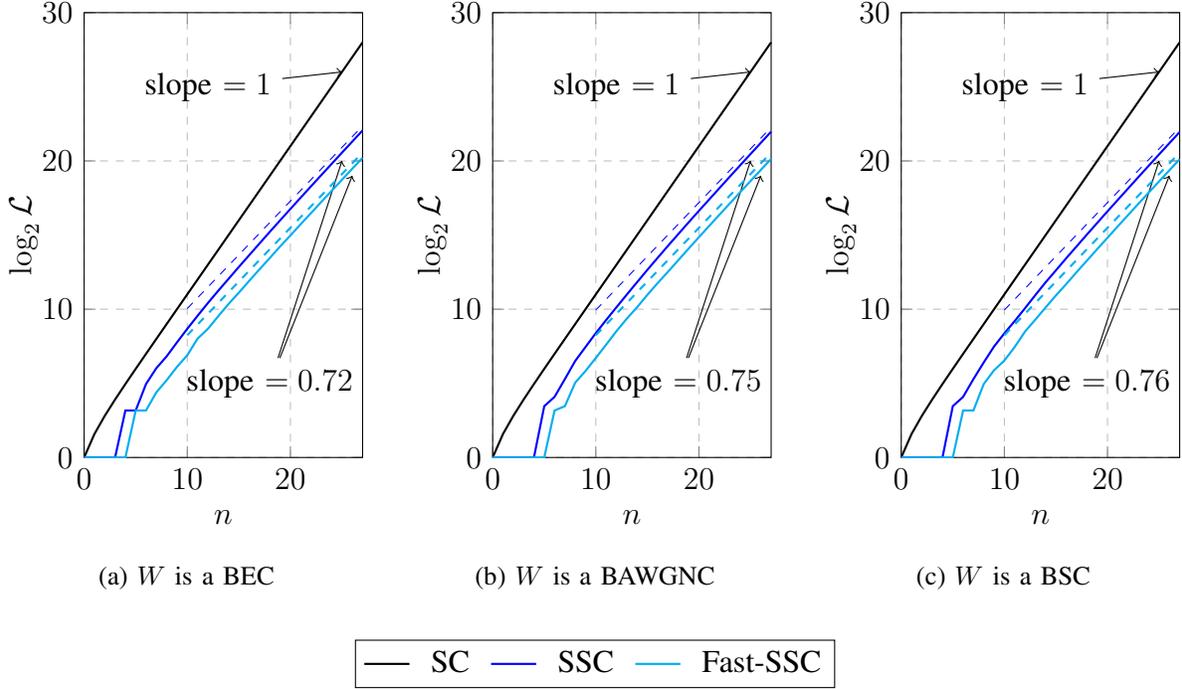
\begin{figure*}[t]
\centering
\begin{subfigure}{.32\textwidth}
    \centering
    \begin{tikzpicture}

\begin{axis}[
scale=1,
xmin=0,
xmax=27,
ymin=0,
ymax=30,
grid=both,
ymajorgrids=true,
xmajorgrids=true,
grid style=dashed,
width=\textwidth, height=7.5cm,
xlabel={$n$},
ylabel={$\log_2 \mathcal{L}$},
ylabel shift=-7,
legend cell align={left},
legend pos=north west,
legend style={
	column sep= 2mm,
},
legend to name=legend-BECAll,
legend columns=3,
]

\node (comment1) at (axis cs:12,25){slope $=1$};
\draw[->] (comment1) -- (axis cs:25,26);

\node (comment) at (axis cs:18,5){slope $=0.72$};
\draw[->] (comment) -- (axis cs:25,20);
\draw[->] (comment) -- (axis cs:26,19);

\addplot[
color=black,
thick,
]
table {
0  0
1  1.584963
2  2.807355
3  3.906891
4  4.954196
5  5.977280
6  6.988685
7  7.994353
8  8.997179
9  9.998590
10 10.999295
11 11.999648
12 12.999824
13 13.999912
14 14.999956
15 15.999978
16 16.999989
17 17.999994
18 18.999997
19 19.999999
20 20.999999
21 22.000000
22 23.000000
23 24.000000
24 25.000000
25 26.000000
26 27.000000
27 28.000000
};
\addlegendentry{SC}

\addplot[
color=blue,
thick,
]
table {
0  0
1  0
2  0
3  0
4  3.169925001442312151e+00
5  3.169925001442312151e+00
6  4.954196310386874913e+00
7  6.022367813028454364e+00
8  6.820178962415187840e+00
9  7.761551232444479531e+00
10 8.675957032941749247e+00
11 9.556506054671928041e+00
12 1.041679752760606092e+01
13 1.124257868945134575e+01
14 1.204678297035635026e+01
15 1.285155385842977793e+01
16 1.363945376809673782e+01
17 1.442999510300424681e+01
18 1.521663219602045913e+01
19 1.599189823692343637e+01
20 1.676101874564537653e+01
21 1.752772897918726969e+01
22 1.828869938626520764e+01
23 1.904654053306329331e+01
24 1.980145734299838978e+01
25 2.055595680427493477e+01
26 2.130626699765168297e+01
27 2.205552903495953387e+01
};
\addlegendentry{SSC}

\addplot[
color=cyan,
thick,
]
table {
0  0
1  0
2  0
3  0
4  0
5  3.169925001442312151e+00
6  3.169925001442312151e+00
7  4.392317422778760694e+00
8  5.209453365628950117e+00
9  6.108524456778169132e+00
10 6.894817763307943714e+00
11 8.016808287686554735e+00
12 8.675957032941749247e+00
13 9.525520809095070263e+00
14 1.035204342579543280e+01
15 1.111569395219701128e+01
16 1.190951811429661511e+01
17 1.268759438957612495e+01
18 1.345288469748661697e+01
19 1.423144620918397685e+01
20 1.497356284719469599e+01
21 1.573843536569174439e+01
22 1.648181519529292771e+01
23 1.723945868307927398e+01
24 1.798214131144002437e+01
25 1.873002891306493822e+01
26 1.947577498446823796e+01
27 2.021757555754692959e+01
};
\addlegendentry{Fast-SSC}

\addplot[
color=blue,
dashed,
domain=10:27,
samples=10,
smooth,
]
{(1-1/3.63)*x+2.8};

\addplot[
color=cyan,
dashed,
thick,
domain=10:27,
samples=10,
smooth,
]
{(1-1/3.63)*x+1};

\end{axis}
\end{tikzpicture}
	\caption{$W$ is a BEC}
	\label{fig:BECresAll}
\end{subfigure}
\begin{subfigure}{.32\textwidth}
    \centering
    \begin{tikzpicture}

\begin{axis}[
scale=1,
xmin=0,
xmax=27,
ymin=0,
ymax=30,
grid=both,
ymajorgrids=true,
xmajorgrids=true,
grid style=dashed,
width=\textwidth, height=7.5cm,
xlabel={$n$},
ylabel={$\log_2 \mathcal{L}$},
ylabel shift=-7,
legend cell align={left},
legend pos=north west,
]

\node (comment1) at (axis cs:12,25){slope $=1$};
\draw[->] (comment1) -- (axis cs:25,26);

\node (comment) at (axis cs:18,5){slope $=0.75$};
\draw[->] (comment) -- (axis cs:25,20);
\draw[->] (comment) -- (axis cs:26,19);

\addplot[
color=black,
thick,
]
table {
0  0
1  1.584963
2  2.807355
3  3.906891
4  4.954196
5  5.977280
6  6.988685
7  7.994353
8  8.997179
9  9.998590
10 10.999295
11 11.999648
12 12.999824
13 13.999912
14 14.999956
15 15.999978
16 16.999989
17 17.999994
18 18.999997
19 19.999999
20 20.999999
21 22.000000
22 23.000000
23 24.000000
24 25.000000
25 26.000000
26 27.000000
27 28.000000
};

\addplot[
color=blue,
thick,
]
table {
0  0
1  0
2  0
3  0
4  0
5  3.459431618637297312e+00
6  4.087462841250339096e+00
7  5.285402218862248702e+00
8  6.507794640198696356e+00
9  7.451211111832328982e+00
10 8.388017285345135576e+00
11 9.269126679149417569e+00
12 1.013057056280542589e+01
13 1.098228060455828370e+01
14 1.182137539237253954e+01
15 1.265664831174735028e+01
16 1.347078595576430082e+01
17 1.427663333088670328e+01
18 1.506865233891319278e+01
19 1.585941345668450708e+01
20 1.663703519100529604e+01
21 1.740647907283767992e+01
22 1.817561289715512629e+01
23 1.894365089933583945e+01
24 1.970479356355991385e+01
25 2.046368899950802600e+01
26 2.122038009912986212e+01
27 2.197507091502399845e+01
};

\addplot[
color=cyan,
thick,
]
table {
0  0
1  0
2  0
3  0
4  0
5  0
6  3.169925001442312151e+00
7  3.459431618637297312e+00
8  5.044394119358453388e+00
9  5.832890014164741288e+00
10 6.714245517666122431e+00
11 7.607330313749610440e+00
12 8.519636252843213242e+00
13 9.359749560322329742e+00
14 1.013057056280542589e+01
15 1.096938652134228320e+01
16 1.176528609872368314e+01
17 1.252869828357418847e+01
18 1.330904996251514483e+01
19 1.408704840470706010e+01
20 1.485754368965403849e+01
21 1.562156542291651995e+01
22 1.638260714100043458e+01
23 1.713614002656481006e+01
24 1.789395915545862792e+01
25 1.864348631087148789e+01
26 1.939359365732201113e+01
27 2.014168183648478205e+01
};

\addplot[
color=blue,
dashed,
domain=10:27,
samples=10,
smooth,
]
{(1-1/3.63)*x+2.7};

\addplot[
color=cyan,
dashed,
thick,
domain=10:27,
samples=10,
smooth,
]
{(1-1/3.63)*x+1};

\end{axis}
\end{tikzpicture}
	\caption{$W$ is a BAWGNC}
	\label{fig:AWGNresAll}
\end{subfigure}
\begin{subfigure}{.32\textwidth}
    \centering
    \begin{tikzpicture}

\begin{axis}[
scale=1,
xmin=0,
xmax=27,
ymin=0,
ymax=30,
grid=both,
ymajorgrids=true,
xmajorgrids=true,
grid style=dashed,
width=\textwidth, height=7.5cm,
xlabel={$n$},
ylabel={$\log_2 \mathcal{L}$},
ylabel shift=-7,
legend cell align={left},
legend pos=north west,
]

\node (comment1) at (axis cs:12,25){slope $=1$};
\draw[->] (comment1) -- (axis cs:25,26);

\node (comment) at (axis cs:18,5){slope $=0.76$};
\draw[->] (comment) -- (axis cs:25,20);
\draw[->] (comment) -- (axis cs:26,19);

\addplot[
color=black,
thick,
]
table {
0  0
1  1.584963
2  2.807355
3  3.906891
4  4.954196
5  5.977280
6  6.988685
7  7.994353
8  8.997179
9  9.998590
10 10.999295
11 11.999648
12 12.999824
13 13.999912
14 14.999956
15 15.999978
16 16.999989
17 17.999994
18 18.999997
19 19.999999
20 20.999999
21 22.000000
22 23.000000
23 24.000000
24 25.000000
25 26.000000
26 27.000000
27 28.000000
};

\addplot[
color=blue,
thick,
]
table {
0  0
1  0
2  0
3  0
4  0
5  3.459431618637297312e+00
6  4.087462841250339096e+00
7  5.285402218862248702e+00
8  6.409390936137701722e+00
9  7.483815777264256397e+00
10 8.370687406807217457e+00
11 9.197216693110052077e+00
12 1.006474276475025675e+01
13 1.094909715572920739e+01
14 1.181498293642679087e+01
15 1.262182294739420207e+01
16 1.344047989881121907e+01
17 1.423369456554691226e+01
18 1.503217490229382491e+01
19 1.581545835083124629e+01
20 1.660074050438925752e+01
21 1.737353640716244740e+01
22 1.814266052948796215e+01
23 1.891275779561843606e+01
24 1.967743921421040554e+01
25 2.043903848732817252e+01
26 2.119554943325594465e+01
27 2.195090576136351146e+01
};

\addplot[
color=cyan,
thick,
]
table {
0  0
1  0
2  0
3  0
4  0
5  0
6  3.169925001442312151e+00
7  3.169925001442312151e+00
8  4.954196310386874913e+00
9  5.882643049361841570e+00
10 6.539158811108030989e+00
11 7.451211111832328982e+00
12 8.487840033823051300e+00
13 9.296916206879288325e+00
14 1.011764310138909195e+01
15 1.092999806260902673e+01
16 1.169740200750270809e+01
17 1.249610477934012565e+01
18 1.327160892415174409e+01
19 1.405095396520803241e+01
20 1.481683366794566936e+01
21 1.559090862653719789e+01
22 1.635226762216943897e+01
23 1.711505381338561094e+01
24 1.786683820831663283e+01
25 1.861866151632295896e+01
26 1.936772827605014413e+01
27 2.011689852254886546e+01
};

\addplot[
color=blue,
dashed,
domain=10:27,
samples=10,
smooth,
]
{(1-1/3.63)*x+2.7};

\addplot[
color=cyan,
dashed,
thick,
domain=10:27,
samples=10,
smooth,
]
{(1-1/3.63)*x+1};

\end{axis}
\end{tikzpicture}
	\caption{$W$ is a BSC}
	\label{fig:BSCresAll}
\end{subfigure}
\ref{legend-BECAll}
	\caption{Latency of SC, SSC, and Fast-SSC decoding of polar codes constructed according to Definition~\ref{def:construction} with $p_{\rm e} = 10^{-3}$ and $I(W)=0.5$.}
	\label{fig:resAll}
\end{figure*}

\section{Conclusions}\label{sec:concl}

This paper proves that the latency of the simplified successive cancellation decoder proposed in \cite{alamdar} is sublinear in the block length. More specifically, this latency scales at most as $N^{1-1/\mu}$, where $N$ is the block length and $\mu$ is the scaling exponent of the transmission channel. This is significantly better than the latency of the standard successive cancellation decoder, which scales linearly in the block length. Numerical results show that the proposed bound is tight and that pruning additional constituent codes does not improve much the latency for large $N$. Proving rigorous lower bounds on the latency is an interesting avenue for future research. As shown in \cite{wang2019log}, changing the code construction has proved beneficial to reduce the time complexity per information bit. Thus, another interesting direction for future work is to allow for variants in the construction of the polar code, in order to further reduce the latency of the decoding process.

\section*{Acknowledgments}

M.~Mondelli was partially supported by grants NSF DMS-1613091, CCF-1714305, IIS-1741162, and ONR N00014-18-1-2729. S.~A.~Hashemi is supported by a Postdoctoral Fellowship from the Natural Sciences and Engineering Research Council of Canada (NSERC) and by Huawei. The authors would like to thank the anonymous reviewers for their comments that helped improving the quality of the manuscript.

\appendices

\section{Proof of Intermediate Lemmas}\label{app:proofs}

\begin{proof}[Proof of Lemma~\ref{lemma:unpolarized}]
We follow a strategy similar to the proof of \cite[Theorem 1]{MHU15unif-ieeeit}. Let $h(x)$ be the function of Definition \ref{def:upscal}, and define
\begin{equation}\label{eq:defrho1}
\rho_1 = \min \left( \frac{1}{2}, -\log_2\displaystyle\sup_{\substack{x\in (0, 1), y \in [x\sqrt{2-x^2}, 2x-x^2]}}\displaystyle\frac{h(x^2)+h(y)}{2h(x)}\right).
\end{equation}
Set 
\begin{equation}\label{eq:defalpha}
\gamma = \frac{1}{\nu}\log_2 \left( 1+\frac{2^{-1/\mu}-2^{-\rho_1}}{2^{-1/\mu}+2^{-\rho_1}} \right).
\end{equation}
By using \eqref{eq:suph} and the fact that $\mu > 2$, we immediately realize that $2^{-1/\mu}-2^{-\rho_1} > 0$, hence that $\gamma > 0$. In addition, it is easy to check that $\gamma <1$. Then, by \cite[Lemma 6]{MHU15unif-ieeeit}, for $n\ge 1$ and for any $\delta>0$,
\begin{equation}\label{eq:formula1}
    {\mathbb E}\left[(Z_n(1-Z_n))^{\gamma}\right] \le \frac{1}{\delta}\left(2^{-\rho_1}+\sqrt{2}\frac{\delta}{1-\delta} c_1\right)^n,
\end{equation}
where $c_1$ is a constant that depends only on $\nu$ (and not on $n, W$). Set  
\begin{equation}\label{eq:defdelta}
\delta = \frac{2^{-1/\mu}-2^{-\rho_1}}{2\sqrt{2}c_1+2^{-1/\mu}-2^{-\rho_1}}.
\end{equation}
Since $2^{-1/\mu}-2^{-\rho_1} > 0$, \eqref{eq:defdelta} is a valid choice for $\delta$. By combining \eqref{eq:formula1} and \eqref{eq:defdelta}, we deduce that
\begin{equation}\label{eq:expeqnew}
    {\mathbb E}\left[(Z_n(1-Z_n))^{\gamma}\right] \le c_2 \left(\frac{1}{2}(2^{-1/\mu}+2^{-\rho_1})\right)^n, 
\end{equation}
where $c_2$ is a constant that depends only on $\nu$ (and not on $n, W$). The proof follows from the chain of inequalities below:
\begin{equation}
\begin{split}
{\mathbb P}\left(Z_n \in \left[2^{-\nu n}, 1- 2^{-\nu n}\right]\right)
& \stackrel{\mathclap{\mbox{\footnotesize(a)}}}{=} {\mathbb P}\left( (Z_n(1-Z_n))^{\gamma}\ge (2^{-\nu n}(1- 2^{-\nu n}))^{\gamma}\right)\\
& \stackrel{\mathclap{\mbox{\footnotesize(b)}}}{\le} \frac{{\mathbb E}\left[(Z_n(1-Z_n))^{\gamma}\right]}{ (2^{-\nu n}(1- 2^{-\nu n}))^{\gamma}}\\
&\stackrel{\mathclap{\mbox{\footnotesize(c)}}}{\le} \frac{c_2 \left(\frac{1}{2}(2^{-1/\mu}+2^{-\rho_1})\right)^n}{(2^{-\nu n}(1- 2^{-\nu n}))^{\gamma}}\\
&\stackrel{\mathclap{\mbox{\footnotesize(d)}}}{\le} 2\hspace{0.1em}c_2\hspace{0.1em} \left(\frac{1}{2}(2^{-1/\mu}+2^{-\rho_1})2^{\nu\gamma}\right)^n \stackrel{\mathclap{\mbox{\footnotesize(e)}}}{=} 2\hspace{0.1em}c_2\hspace{0.1em} 2^{-n/\mu},
\end{split}
\end{equation}
where the equality (a) uses the concavity of the function $f(x)=(x(1-x))^\gamma$; the inequality (b) follows from Markov's inequality; the inequality (c) uses \eqref{eq:expeqnew}; the inequality (d) uses that $1-2^{-\nu n}\ge 1/2$ for any $n\ge 1$ and $\nu>1$; and the equality (e) uses the definition \eqref{eq:defalpha}.
\end{proof}

\begin{remark}[Number of un-polarized channels for BEC]\label{rmk:BEC}
For the special case in which $W$ is a BEC, the result \eqref{eq:nummid} holds, where $\mu$ is an upper bound on the scaling exponent of BEC according to Definition \ref{def:ubscalBEC}. This is proved by setting
\begin{equation}\label{eq:defrho1BEC}
\rho_1 = \min \left( \frac{1}{2}, -\log_2\displaystyle\sup_{x\in (0, 1)}\displaystyle\frac{h(x^2)+h(2x-x^2)}{2h(x)}\right),
\end{equation}
and by following the same argument of the proof above.
\end{remark}

\begin{proof}[Proof of Lemma~\ref{lemma:rate01}]
There are two scenarios in Lemma~\ref{lemma:rate01}: $Z(W)\le 1/N^3$ and $Z(W)\ge 1-1/N^3$. We start with the first scenario, $Z(W)\le 1/N^3$. Note that \eqref{eq:eqBMSC} implies that $Z_n\le 2Z_{n-1}$. Thus, as $Z(W)\le 1/N^3$, for any $i\in\{1, \ldots, N/M\}$, we have that
\begin{equation}\label{eq:ubBattalemma1}
    Z(W_{n-m}^{(i)})\le \frac{2^{n-m}}{N^3}=\frac{1}{M\cdot N^2}\le \frac{1}{N^2}.
\end{equation}
Consequently, for sufficiently large $N$, $Z(W_{n-m}^{(i)})\le p_{\rm e}/ N$ for any $i$, and $\mathcal C_{\rm polar}(p_{\rm e}/M, W, N/M)$ has rate $1$.

Let us now consider the second scenario, $Z(W)\ge 1-1/N^3$. Consider the random process $1-Z_n$ and note that \eqref{eq:eqBMSC} implies that $1-Z_n\le 2(1-Z_{n-1})$. Thus, as $1-Z(W)\le 1/N^3$, for any $i\in\{1, \ldots, N/M\}$, we have that 
\begin{equation}
  1-  Z(W_{n-m}^{(i)})\le \frac{2^{n-m}}{N^3}=\frac{1}{M\cdot N^2}\le \frac{1}{N^2}.
\end{equation}
Consequently, for sufficiently large $N$, $Z(W_{n-m}^{(i)})> p_{\rm e}/N$ for any $i$, and $\mathcal C_{\rm polar}(p_{\rm e}/M, W, N/M)$ has rate $0$.
\end{proof}

\bibliographystyle{IEEEtran}
\bibliography{lth,lthpub}

\newcommand{\SortNoop}[1]{}
\begin{thebibliography}{10}
\providecommand{\url}[1]{#1}
\csname url@samestyle\endcsname
\providecommand{\newblock}{\relax}
\providecommand{\bibinfo}[2]{#2}
\providecommand{\BIBentrySTDinterwordspacing}{\spaceskip=0pt\relax}
\providecommand{\BIBentryALTinterwordstretchfactor}{4}
\providecommand{\BIBentryALTinterwordspacing}{\spaceskip=\fontdimen2\font plus
\BIBentryALTinterwordstretchfactor\fontdimen3\font minus
  \fontdimen4\font\relax}
\providecommand{\BIBforeignlanguage}[2]{{%
\expandafter\ifx\csname l@#1\endcsname\relax
\typeout{** WARNING: IEEEtran.bst: No hyphenation pattern has been}%
\typeout{** loaded for the language `#1'. Using the pattern for}%
\typeout{** the default language instead.}%
\else
\language=\csname l@#1\endcsname
\fi
#2}}
\providecommand{\BIBdecl}{\relax}
\BIBdecl

\bibitem{Ari09}
E.~{Ar\i kan}, ``Channel polarization: A method for constructing
  capacity-achieving codes for symmetric binary-input memoryless channels,''
  \emph{IEEE Trans. Inform. Theory}, vol.~55, no.~7, pp. 3051--3073, July 2009.

\bibitem{3gpp_polar}
``Final report of {3GPP TSG RAN WG1} \#87 v1.0.0,'' {R}eno, USA, Nov. 2016.

\bibitem{TV13con}
I.~Tal and A.~Vardy, ``How to construct polar codes,'' \emph{IEEE Trans.
  Inform. Theory}, vol.~59, no.~10, pp. 6562--6582, Oct. 2013.

\bibitem{RHTT}
R.~Pedarsani, H.~Hassani, I.~Tal, and E.~Telatar, ``On the construction of
  polar codes,'' in \emph{Proc. of the IEEE Int. Symposium on Inform. Theory
  (ISIT)}, St. Petersberg, Russia, Aug. 2011, pp. 11--15.

\bibitem{mondelli2018construction}
M.~Mondelli, S.~H. Hassani, and R.~Urbanke, ``Construction of polar codes with
  sublinear complexity,'' \emph{IEEE Trans. Inform. Theory}, vol.~65, no.~5,
  pp. 2782--2791, May 2019.

\bibitem{ArT09}
E.~{Ar\i kan} and I.~E. {Telatar}, ``{On the rate of channel polarization},''
  in \emph{Proc. of the IEEE Int. Symposium on Inform. Theory (ISIT)}, Seoul,
  South Korea, July 2009, pp. 1493--1495.

\bibitem{MHU15unif-ieeeit}
M.~Mondelli, S.~H. Hassani, and R.~Urbanke, ``Unified scaling of polar codes:
  {E}rror exponent, scaling exponent, moderate deviations, and error floors,''
  \emph{IEEE Trans. Inform. Theory}, vol.~62, no.~12, pp. 6698--6712, Dec.
  2016.

\bibitem{HAU14}
S.~H. Hassani, K.~Alishahi, and R.~Urbanke, ``Finite-length scaling for polar
  codes,'' \emph{IEEE Trans. Inform. Theory}, vol.~60, no.~10, pp. 5875--5898,
  Oct. 2014.

\bibitem{XG13}
V.~Guruswami and P.~Xia, ``Polar codes: {S}peed of polarization and polynomial
  gap to capacity,'' \emph{IEEE Trans. Inform. Theory}, vol.~61, no.~1, pp.
  3--16, Jan. 2015.

\bibitem{GB14}
D.~Goldin and D.~Burshtein, ``Improved bounds on the finite length scaling of
  polar codes,'' \emph{IEEE Trans. Inform. Theory}, vol.~60, no.~11, pp.
  6966--6978, Nov. 2014.

\bibitem{MHU14list-ieeeit}
M.~Mondelli, S.~H. Hassani, and R.~Urbanke, ``Scaling exponent of list decoders
  with applications to polar codes,'' \emph{IEEE Trans. Inform. Theory},
  vol.~61, no.~9, pp. 4838--4851, Sept. 2015.

\bibitem{fazeli2018binary}
A.~Fazeli, H.~Hassani, M.~Mondelli, and A.~Vardy, ``Binary linear codes with
  optimal scaling: Polar codes with large kernels,'' in \emph{IEEE Information
  Theory Workshop (ITW)}.\hskip 1em plus 0.5em minus 0.4em\relax IEEE, 2018,
  pp. 1--5.

\bibitem{guruswami2019ar}
V.~Guruswami, A.~Riazanov, and M.~Ye, ``Ar{\i}kan meets shannon: Polar codes
  with near-optimal convergence to channel capacity,''
  \texttt{arXiv:1911.03858}, Nov. 2019.

\bibitem{KMTU10}
S.~B. Korada, A.~Montanari, I.~E. Telatar, and R.~Urbanke, ``{An empirical
  scaling law for polar codes},'' in \emph{Proc. IEEE Int. Symp. on Inf. Theory
  (ISIT)}, Austin, TX, USA, June 2010, pp. 884--888.

\bibitem{fong2017scaling}
S.~Fong and V.~Tan, ``Scaling exponent and moderate deviations asymptotics of
  polar codes for the {AWGN} channel,'' \emph{Entropy}, vol.~19, no.~7, p. 364,
  2017.

\bibitem{wang2018polar}
H.-P. Wang and I.~Duursma, ``Polar code moderate deviation: Recovering the
  scaling exponent,'' \texttt{arXiv:1806.02405}", June 2018.

\bibitem{blasiok2018polar}
J.~B{\l}asiok, V.~Guruswami, and M.~Sudan, ``Polar codes with exponentially
  small error at finite block length,'' in \emph{Approximation, Randomization,
  and Combinatorial Optimization. Algorithms and Techniques (APPROX/RANDOM)},
  no.~34, 2018, pp. 34:1--34:18.

\bibitem{TVa15}
I.~Tal and A.~Vardy, ``{List decoding of polar codes},'' \emph{IEEE Trans.
  Inform. Theory}, vol.~61, no.~5, pp. 2213--2226, May 2015.

\bibitem{hashemi_PSCL}
S.~A. Hashemi, A.~Balatsoukas-Stimming, P.~Giard, C.~Thibeault, and W.~J.
  Gross, ``Partitioned successive-cancellation list decoding of polar codes,''
  in \emph{Proc. of the IEEE Int. Conf. on Acoust., Speech and Signal
  Process.}, Shanghai, China, Mar. 2016, pp. 957--960.

\bibitem{hashemi2017partitioned}
S.~A. Hashemi, M.~Mondelli, S.~H. Hassani, R.~Urbanke, and W.~J. Gross,
  ``Partitioned list decoding of polar codes: Analysis and improvement of
  finite length performance,'' in \emph{GLOBECOM 2017-2017 IEEE Global
  Communications Conference}.\hskip 1em plus 0.5em minus 0.4em\relax IEEE,
  2017, pp. 1--7.

\bibitem{hashemi2018decoder}
S.~A. Hashemi, M.~Mondelli, S.~H. Hassani, C.~Condo, R.~L. Urbanke, and W.~J.
  Gross, ``Decoder partitioning: Towards practical list decoding of polar
  codes,'' \emph{IEEE Transactions on Communications}, vol.~66, no.~9, pp.
  3749--3759, 2018.

\bibitem{alamdar}
A.~Alamdar-Yazdi and F.~R. Kschischang, ``A simplified successive-cancellation
  decoder for polar codes,'' \emph{IEEE Commun. Lett.}, vol.~15, no.~12, pp.
  1378--1380, December 2011.

\bibitem{sarkis}
G.~Sarkis, P.~Giard, A.~Vardy, C.~Thibeault, and W.~Gross, ``Fast polar
  decoders: Algorithm and implementation,'' \emph{IEEE J. Sel. Areas Commun.},
  vol.~32, no.~5, pp. 946--957, May 2014.

\bibitem{hanif}
M.~Hanif and M.~Ardakani, ``Fast successive-cancellation decoding of polar
  codes: Identification and decoding of new nodes,'' \emph{IEEE Commun. Lett.},
  vol.~21, no.~11, pp. 2360--2363, November 2017.

\bibitem{hashemi_SSCL_TCASI}
S.~A. Hashemi, C.~Condo, and W.~J. Gross, ``A fast polar code list decoder
  architecture based on sphere decoding,'' \emph{IEEE Trans. Circuits Syst. I},
  vol.~63, no.~12, pp. 2368--2380, December 2016.

\bibitem{hashemi_FSSCL_TSP}
S.~A. Hashemi, C.~Condo, and W.~J. Gross, ``Fast and flexible
  successive-cancellation list decoders for polar codes,'' \emph{IEEE Trans.
  Signal Process.}, vol.~65, no.~21, pp. 5756--5769, November 2017.

\bibitem{wang2019log}
H.-P. Wang and I.~Duursma, ``Log-logarithmic time pruned polar coding,''
  \texttt{arXiv:1905.13340}, May 2019.

\bibitem{RiU08}
T.~Richardson and R.~Urbanke, \emph{Modern Coding Theory}.\hskip 1em plus 0.5em
  minus 0.4em\relax Cambridge University Press, 2008.

\end{thebibliography}

\end{document}